\pdfoutput=1


\documentclass[conference,letterpaper]{IEEEtran}

\addtolength{\topmargin}{9mm}

%
%
\usepackage[utf8]{inputenc} 
\usepackage[T1]{fontenc}
\usepackage{url}
\usepackage{ifthen}
\usepackage{cite}
\usepackage[cmex10]{amsmath} 


\interdisplaylinepenalty=2500 

\hyphenation{op-tical net-works semi-conduc-tor}

\usepackage{cleveref}
\crefname{algocf}{alg.}{algs.}
\Crefname{algocf}{Algorithm}{Algorithms}
\usepackage{amssymb}
\usepackage{amsmath}
\usepackage{amsthm}
\usepackage{stmaryrd}
\usepackage{xcolor}
\usepackage[ruled, vlined]{algorithm2e}
\usepackage[makeroom]{cancel}

\usepackage{nicefrac}
\usepackage{enumitem}

\usepackage{tikz}
\usepackage{pgfplots}
\usepackage{tikzscale}
\pgfplotsset{compat=1.18}

\usetikzlibrary{pgfplots.groupplots}
\usepgfplotslibrary{fillbetween}

\definecolor{red}{HTML}{ca0020}
\definecolor{lightred}{HTML}{f4a582}
\definecolor{lightblue}{HTML}{92c5de}
\definecolor{green}{HTML}{008837}
\definecolor{blue}{HTML}{2c7bb6}

\pgfplotsset{
  tick label style={font=\small},
  every axis plot/.style={
  mark=none,
  line width=1pt, 
  grid=major},
}

\tikzset{every mark/.append style={scale=1.5}}
\tikzstyle{separator} = [thick,-, red, fill opacity=0.4]
\tikzstyle{pruned} = [red, fill opacity=0.1]


\newtheorem{theorem}{Theorem}[section]

\newtheorem{lemma}{Lemma}[section]
\theoremstyle{definition}
\newtheorem{definition}{Definition}[section]

\usepackage{amsmath,amsfonts,bm}









\def\eqref#1{equation~\ref{#1}}









\def\1{\bm{1}}




\def\rvu{{\mathbf{i}}}

\def\rvu{{\mathbf{u}}}

\def\rvx{{\mathbf{x}}}
\def\rvy{{\mathbf{y}}}
\def\rvz{{\mathbf{z}}}







\DeclareMathAlphabet{\mathsfit}{\encodingdefault}{\sfdefault}{m}{sl}
\SetMathAlphabet{\mathsfit}{bold}{\encodingdefault}{\sfdefault}{bx}{n}











\newcommand{\KL}{D_{\mathrm{KL}}}



\usepackage{mathtools}

\newcommand{\Nats}{\mathbb{N}}
\newcommand{\Ints}{\mathbb{Z}}
\newcommand{\Reals}{\mathbb{R}}
\newcommand{\Oh}{\mathcal{O}}

\newcommand{\Exp}{\mathbb{E}}
\newcommand{\Prob}{\mathbb{P}}

\newcommand{\Normal}{\mathcal{N}}
\newcommand{\Laplace}{\mathcal{L}}
\newcommand{\Unif}{\mathcal{U}}

\newcommand{\Ind}{\mathbf{1}}

\newcommand{\logtwo}{\log_2}
\newcommand{\exptwo}{\exp_2}

\newcommand{\lvlset}{\mathcal{H}}

\DeclarePairedDelimiterX{\infdivx}[2]{[}{]}{%
  #1\delimsize\| #2%
}
\DeclarePairedDelimiterX{\infdivxcolon}[2]{[}{]}{%
  #1\delimsize: #2%
}
\DeclarePairedDelimiterX{\infdivxarrow}[2]{(}{)}{%
  #1\delimsize\rightarrow #2%
}

\DeclarePairedDelimiterX{\infdivxtriple}[3]{[}{]}{%
  #1\delimsize: #2 \delimsize: #3%
}

\newcommand{\KLD}{\KL\infdivx}

\newcommand{\CSD}{D_{CS}\infdivx}
\newcommand{\phiD}{D^\phi\infdivx}
\newcommand{\ACSD}{D_{ACS}\infdivx}

\newcommand{\infD}{D_{\infty}\infdivx}
\newcommand{\MI}{\mathbb{I}\infdivxcolon}
\newcommand{\EFI}{\Psi\infdivxarrow}
\newcommand{\Ent}{\mathbb{H}}
\DeclarePairedDelimiter{\norm}{\lVert}{\rVert}
\DeclarePairedDelimiter{\abs}{\lvert}{\rvert}

\DeclarePairedDelimiterX{\innerProd}[2]{\langle}{\rangle}{%
    #1,#2%
}
\DeclareMathOperator*{\esssup}{ess\,sup}

\DeclareMathOperator\supp{supp}

\begin{document}

\title{On Channel Simulation with Causal Rejection Samplers} 

\author{%
  \IEEEauthorblockN{Daniel Goc}
  \IEEEauthorblockA{University of Cambridge\\
  Cambridge, UK\\
  dsg42@cam.ac.uk
  }
  \and
  \IEEEauthorblockN{Gergely Flamich}
  \IEEEauthorblockA{University of Cambridge\\
  Cambridge, UK\\
  gf332@cam.ac.uk
  }
}

\maketitle

\begin{abstract}
%
One-shot channel simulation has recently emerged as a promising alternative to quantization and entropy coding in machine-learning-based lossy data compression schemes.
However, while there are several potential applications of channel simulation -- lossy compression with realism constraints or differential privacy, to name a few -- little is known about its fundamental limitations.
In this paper, we restrict our attention to a subclass of channel simulation protocols called causal rejection samplers (CRS), establish new, tighter lower bounds on their expected runtime and codelength, and demonstrate the bounds' achievability.
Concretely, for an arbitrary CRS, let $Q$ and $P$ denote a target and proposal distribution supplied as input, and let $K$ be the number of samples examined by the algorithm. 
We show that the expected runtime $\Exp[K]$ of any CRS scales at least as $\exp_2(\infD{Q}{P})$, where $\infD{Q}{P}$ is the R{\'e}nyi $\infty$-divergence.
Regarding the codelength, we show that ${\KLD{Q}{P} \leq \CSD{Q}{P} \leq \Ent[K]}$, where $\CSD{Q}{P}$ is a new quantity we call the channel simulation divergence.
Furthermore, we prove that our new lower bound, unlike the $\KLD{Q}{P}$ lower bound, is achievable tightly, i.e.\ there is a CRS such that ${\Ent[K] \leq \CSD{Q}{P} + \logtwo (e + 1)}$.
Finally, we conduct numerical studies of the asymptotic scaling of the codelength of Gaussian and Laplace channel simulation algorithms.
\end{abstract}
\section{Introduction}
\noindent
Let $\rvx, \rvy \sim P_{\rvx, \rvy}$ be a pair of correlated random variables with $P_{\rvx, \rvy}$ shared between communicating parties Alice and Bob, who also share a source of common randomness.
Then, during one round of channel simulation, Alice receives ${\rvy \sim P_\rvy}$, and aims to send the minimum number of bits to Bob, so that he can simulate a single sample $\rvx \sim P_{\rvx \mid \rvy}$.
This procedure is a broad generalization of dithered quantization (DQ, \cite{ziv1985universal}) and provides a particularly natural alternative to quantization and entropy coding for machine-learning-based data compression.
\par
Channel simulation's advantage in machine learning stems from the fact that $P_{\rvx \mid \rvy}$ may be continuous (usually additive uniform, Gaussian or Laplace in practice), which opens the possibility to use powerful generative models for non-linear transform coding \cite{balle2020nonlinear}.
In just the last few years, various forms of channel simulation were used to compress images with variational autoencoders \cite{flamich2020compressing, agustsson2020universally, flamich2022fast,flamich2023faster} as well as implicit neural representations of various types of data \cite{guo2023compression, he2024recombiner}.
Furthermore, it has shown promise in differentally private federated learning \cite{shah2021optimal,shahmiri2023communication} and lossy compression with realism constraints \cite{Theis2021a}. 
\par
However, despite its growing popularity, characterising the fundamental limitations of channel simulation tightly remains an open problem.
Denoting the common randomness as $\Pi$, Li and Anantharam \cite{li2021unified} have shown that
\begin{align}
\MI{\rvx}{\rvy} \leq \Ent[\rvx \mid \Pi] \leq \MI{\rvx}{\rvy} + \logtwo(\MI{\rvx}{\rvy} + 1) + 4.732.
\nonumber
\end{align}
However, akin to entropy coding algorithms, there exist channel simulation algorithms with a much stronger one-shot guarantee.
Let us take greedy rejection sampling (GRS, presented in \Cref{alg:grs}) as our example, an algorithm which was first introduced by Harsha et al.\ \cite{harsha2007communication} for discrete distributions and was generalized in \cite{flamich2023adaptive} and \cite{flamich2023faster} to general probability spaces.
Given $\rvy \sim P_\rvy$, let us introduce the shorthand $Q \gets P_{\rvx \mid \rvy}$ and $P \gets P_\rvx$, and let $K$ be the code GRS outputs.
Then (see \cite{harsha2007communication}):
\begin{align}
\KLD{Q}{P} &\leq \Ent[K] \nonumber\\
&\leq \KLD{Q}{P} + 2\logtwo(\KLD{Q}{P} + 1) + \Oh(1).
\nonumber
\end{align}
Note, that sadly the asymptotics of the lower and upper bounds on the coding rate do not match; furthermore it can be shown that they cannot be tightened either \cite{li2018strong}.
In a similar vein, Agustsson and Theis \cite{agustsson2020universally} show that under the computational hardness assumption RP $\neq$ NP, the expected runtime of any ``general-purpose'' channel simulation algorithm must scale at least as $\exptwo(\KLD{Q}{P})$.
However, the expected runtime of all currently known algorithms scale as $\exptwo(\infD{Q}{P})$.
\par
\textit{Contributions.}
In this paper, we contribute to the understanding of the above mentioned fundamental limitations of channel simulation algorithms, by studying an important subclass of algorithms called causal rejection samplers (CRS).
In particular, we show that:
\begin{itemize}
\item The expected runtime of any CRS must scale at least as $\exptwo(\infD{Q}{P})$.
\item The expected one-shot coding rate of any causal rejection sampler that returns code $K$ is bounded below as 
\begin{align}
\KLD{Q}{P} \leq \CSD{Q}{P} \leq \Ent[K],
\nonumber
\end{align}
where $\CSD{Q}{P}$ is the \textit{channel simulation divergence}, a quantity that we define and analyze in \Cref{sec:channel_simulation_divergence}.
\item We prove that this lower bound is tight, and in particular it is achieved by GRS, with
\begin{align}
&\Ent[K] \leq \CSD{Q}{P} + \logtwo(e + 1) \nonumber \\
&\,\,\,\leq \KLD{Q}{P} + \logtwo(\KLD{Q}{P} + 1) + \logtwo(2e + 2).
\nonumber
\end{align}
\item We numerically investigate the asymptotic behaviour of Laplace and Gaussian channel simulation using CRS.  
\end{itemize}
\begin{algorithm}[t]
\SetAlgoLined
\DontPrintSemicolon
\SetKwInOut{Input}{Input}
\SetKwInOut{Output}{Output}
\Input{Target $Q$, Shared iid samples $X_1, X_2\hdots \sim P$.}
$L_1, S_1 \gets (0, 1)$ \;
\For{$k = 1$ \KwTo $\infty$}{
$U_k \sim \Unif(0, 1)$ \;
$\beta_k \gets \min\left\{1, \max\left\{0, \left(\frac{dQ}{dP}(X_k) - L_k \right) \Big / S_k\right\} \right\}$ \;
 \If{$U_k \leq \beta_k$}{
    \KwRet{$X_k, k$}
 }
 $L_{k + 1} \gets L_k + S_k$ \;
 $\lvlset_{k + 1} \gets \left\{ y \in \Omega \mid L_{k + 1} \leq \frac{dQ}{dP}(y) \right\}$ \;
 $S_{k + 1} \gets Q(\lvlset_{k + 1}) - L_{k + 1} \cdot P(\lvlset_{k + 1})$
}
\caption{Greedy rejection sampling, as formulated in \cite{flamich2023adaptive}.
See \Cref{sec:background} for notation.}
\label{alg:grs}
\end{algorithm}
\setlength{\textfloatsep}{14pt}
\section{Background}
\label{sec:background}
\noindent
\textit{Notation.} 
We denote the base-$2$ logarithm as $\logtwo$ and the natural logarithm as $\ln$. 
We denote the inverse function of $\logtwo$, the base-$2$ exponential, by $\exptwo$.
Let $Q$ and $P$ be Borel probability measures over a Polish space $\Omega$, with $Q \ll P$ and Radon-Nikodym derivative $dQ/dP$.
We define ${\KLD{Q}{P} = \Exp_{Z \sim Q}[\logtwo\nicefrac{dQ}{dP}(Z)]}$ and ${\infD{Q}{P} = \esssup_{x \in \Omega}\{\logtwo\nicefrac{dQ}{dP}(x)\}}$ where the essential supremum is taken with respect to $P$.
For a scalar $p$ we define the binary entropy $\Ent_b[p] = -p \logtwo(p) - (1-p) \logtwo(1-p)$.
For a discrete random variable $K$ with mass function $f$, we denote its Shannon entropy as $\Ent[K] = \Exp[-\logtwo f(K)]$.
Furthermore, we denote the $d$-dimensional Gaussian measure with mean $\mu$ and covariance $\Sigma$ as $\Normal(\mu, \Sigma)$, the $1$-dimensional Laplace measure with location $m$ and scale $s$ as $\Laplace(m, s)$ and the uniform measure on $(a, b)$ as $\Unif(a, b)$.
\par
\textit{One-shot Channel Simulation (OSCS).}
Let Alice and Bob be communicating parties, who share some source of common randomness, and let $\rvx, \rvy \sim P_{\rvx, \rvy}$ be a pair of correlated random variables.
Then, the OSCS problem is as follows: Alice receives a source symbol $\rvy \sim P_\rvy$, and her goal is to send the minimum number of bits to Bob so that he can simulate a sample $\rvx \sim P_{\rvx \mid \rvy}$.
Note, that in OSCS, Bob does not seek to learn $P_{\rvx \mid \rvy}$, he just needs to be able to simulate a sample from it.
An important special case is additive noise, i.e.\ when $\rvx, \rvy \in \Reals^d$ and $\rvx = \rvy + \epsilon$ for some random variable $\epsilon \perp \rvy$.
The most common case that arises in machine learning is additive Gaussian noise, i.e.\ $\epsilon \sim \Normal(0, I)$.
On the other hand, the additive uniform case, i.e.\ when $\epsilon \sim \Unif(-1/2, 1/2)$ can be efficiently solved by dithered quantization \cite{ziv1985universal, agustsson2020universally}.
While it can be extended to perform channel simulation with a richer family of distributions \cite{theis2021algorithms, hegazy2022randomized, flamich2023adaptive, ling2023vector}, dithered quantization relies on algebraic properties of the underlying space and thus cannot be extended to solve OSCS in general.
\par
\textit{Causal rejection samplers} (CRS) on the other hand, are a computational framework that solve the general OSCS problem given that Alice and Bob share an infinite sequence $\Pi = (X_1, X_2, \hdots)$ of i.i.d.\ $P_\rvx$-distributed random variables.
Intuitively, a CRS is a rejection sampler whose criterion for accepting a proposal sample depends only on samples it examined so far.
Following \cite{liu2018rejection}, we formally define CRS.
\begin{definition}[Causal Rejection Sampler]
\label{def:causal_rejection_sampling}
Let $(X_i)_{i \in \mathbb{N}}$ be a sequence of i.i.d.\ random variables with $X_i \sim P$.
Let $K$ be a finite stopping time adapted to the sequence and suppose that $X_K \sim Q$. We call such system a causal rejection sampler, and call $P$ the proposal and $Q$ the target distribution, respectively.
\end{definition}
Operationally, given $\rvy \sim P_{\rvy}$, Alice sets $Q \gets P_{\rvx \mid \rvy}$ as the target and $P \gets P_\rvx$ as the proposal, and uses the elements of $\Pi$ as samples proposed from $P_\rvx$ in her rejection sampler.
She then encodes the index $K$ of her accepted sample $X_K$ and transmits it to Bob.
Given $K$, Bob can recover Alice's accepted sample by selecting the $K$th sample from $\Pi$.
\par
\textit{Greedy Rejection Sampling} (GRS) is an example of a CRS and is presented in \Cref{alg:grs}; for a complete description and analysis we defer the reader to \cite{flamich2023adaptive} and \cite{flamich2023faster}.
In contrast to standard rejection sampling, GRS was designed with OSCS in mind, and hence it attempts to minimize $\Ent[K]$.
To this end, it accepts proposed samples as early as possible (``greedily''): 
for each element of the sample space $x \in \Omega$ there is a step $k$ such that GRS always accepts $x$ if it is proposed before step $k$, and always rejects $x$ if it is proposed after step $k$ \cite{harsha2007communication, flamich2023adaptive}.
\par
For the rest of our paper, we will be analyzing the one-shot performance of OSCS algorithms.
Hence, we will always implicitly assume that Alice has already received some fixed source symbol $\rvy \sim P_\rvy$, and will use the shorthand $Q = P_{\rvx \mid \rvy}$ and $P = P_\rvx$; 
average-case results can be obtained by taking expectations of the appropriate quantities with respect to $\rvy$.
\section{Runtime Bound for Causal Rejection Samplers} 
In this section, we tighten Agustsson and Theis' general runtime $\exptwo(\KLD{Q}{P})$ lower bound \cite{agustsson2020universally}, and show that any CRS's runtime must scale at least as $\exptwo(\infD{Q}{P})$.
\begin{theorem}
\label{thm:runtime_lower_bound}
Let $K$ be an index returned by a CRS with proposal $P$ and target $Q$. Then, 
\begin{align*}
\mathbb{E}[K] \ge \exptwo\left(\infD{Q}{P} \right).
\end{align*}
\end{theorem}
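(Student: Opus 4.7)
The plan is to show that for \emph{every} measurable set $A \subseteq \Omega$ with $P(A) > 0$, one has $\mathbb{E}[K] \geq Q(A)/P(A)$, and then take a supremum over $A$ to recover the essential supremum of $dQ/dP$. The guiding intuition is that if the target $Q$ concentrates mass on a region where $P$ is very thin, then the sampler must, on average, examine at least that ratio many samples before it finds one in the region.

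Concretely, I would start from the identity
\begin{align*}
Q(A) = \Pr[X_K \in A] = \sum_{k=1}^{\infty} \Pr[K = k,\, X_k \in A].
\end{align*}
The key step is to bound each summand by $\Pr[K \geq k]\cdot P(A)$. The inequality $\Pr[K = k, X_k \in A] \leq \Pr[K \geq k, X_k \in A]$ is trivial, so what remains is to show
\begin{align*}
\Pr[K \geq k,\, X_k \in A] = \Pr[K \geq k] \cdot P(A).
\end{align*}
This is exactly the \emph{causality} of the sampler: the event $\{K \geq k\}$ is determined by $X_1,\dots,X_{k-1}$ together with the common randomness, all of which are independent of $X_k \sim P$. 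Summing the bounds and applying the tail-sum formula $\mathbb{E}[K] = \sum_{k \geq 1} \Pr[K \geq k]$ yields $Q(A) \leq P(A)\cdot \mathbb{E}[K]$, i.e.\ $\mathbb{E}[K] \geq Q(A)/P(A)$.

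To finish, I would take the supremum of $Q(A)/P(A)$ over all $A$ with $P(A) > 0$. If $\esssup_P (dQ/dP) = M < \infty$, then choosing the slab $A_\varepsilon = \{dQ/dP > M - \varepsilon\}$ (which has $P(A_\varepsilon) > 0$ by definition of the essential supremum) gives $Q(A_\varepsilon)/P(A_\varepsilon) \geq M - \varepsilon$, and letting $\varepsilon \to 0$ yields $\mathbb{E}[K] \geq M = \exp_2(\infD{Q}{P})$. If $M = \infty$, the same family of slabs shows $\mathbb{E}[K] = \infty$, matching the convention $\exp_2(\infty) = \infty$.

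I expect the only technical subtlety to be making the causality step rigorous: writing the sampler as a sequence of acceptance decisions $a_k \in \{0,1\}$ measurable with respect to $\sigma(X_1, \dots, X_k, U)$ (where $U$ is the common randomness), confirming that $\{K \geq k\} \in \sigma(X_1,\dots,X_{k-1}, U)$, and invoking independence of $X_k$ from this $\sigma$-algebra via Fubini. Once the independence is in hand, the rest is a short chain of elementary inequalities, so the argument should be compact.
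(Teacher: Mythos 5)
Your proposal is correct and follows essentially the same route as the paper: the identity $Q(A)=\sum_k \Pr[K=k, X_k\in A]$, the bound $\Pr[K\ge k, X_k\in A]=\Pr[K\ge k]\,P(A)$ via the stopping-time (causality) property, the tail-sum formula for $\Exp[K]$, and then passing to the supremum of $Q(A)/P(A)$; the only cosmetic difference is that you make the approximating sets explicit as superlevel sets $\{dQ/dP > M-\varepsilon\}$, whereas the paper simply invokes a sequence of sets achieving the essential supremum.
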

\begin{proof}
By definition of R{\'e}nyi $\infty$-divergence, there exists a sequence of sets $A_1, A_2, \dots$ such that $P(A_i) > 0$ for all $i$ and $Q(A_i) / P(A_i) \to \exptwo\left(\infD{Q}{P} \right)$ as $i \to \infty$. 
Now, fix some $i$. 
We now show that $\mathbb{E}[K] \ge Q(A_i) / P(A_i)$:
\begin{align*}
Q(A_i) = \mathbb{P}[X_{K} \in A_i] 
=&\: \sum_{k}{\mathbb{P}[K = k, X_k \in A_i]}
\\
\le&\: \sum_{k}{\mathbb{P}[K \ge k, X_k \in A_i]}
\\
\stackrel{(a)}{=}&\: \sum_{k}{\mathbb{P}[K \ge k] \cdot P(A_i)}
\\
=&\: P(A_i) \cdot \mathbb{E}[K]
\end{align*}
where the equality (a) follows as the event $K \ge k$ is independent of variable $X_k$, since $K$ is a stopping time. 
\end{proof}
In Appendix~\ref{appendix:extended_results}, we generalize \Cref{thm:runtime_lower_bound} in two ways. 
We show that it holds for an extended definition of CRS, as well as for a certain non-causal sampling framework used by e.g.\ the Poisson functional representation / A* coding \cite{li2018strong,flamich2022fast}.
This strengthens a result of Liu and Verd{\'u} \cite[cor.\ 4]{liu2018rejection}, who prove an asymptotic version of our \cref{thm:runtime_lower_bound}.
\section{The Channel Simulation Divergence}
\label{sec:channel_simulation_divergence}
\noindent
In this section, we define a statistical distance between two distributions, which we call the \textit{channel simulation divergence}.
In \Cref{sec:codelength}, we use it to bound the codelength of causal rejection samplers.
Interestingly, it is also connected to 
Li and El Gamal's \textit{excess functional information} \cite{li2018strong} and Hegazy and Li's \textit{layered entropy} \cite{hegazy2022randomized};
we discuss this in Appendix~\ref{appendix:csd_layered_entropy_connection}.
\begin{definition}[Channel Simulation Divergence]
\label{def:channel_simulation_divergence}
Let $Q$ and $P$ be Borel probability measures over a Polish space $\Omega$ with $Q \ll P$.
Then, let
\begin{align}
w(h) = \mathbb{P}_{X \sim P}\left[\frac{dQ}{dP}(X) \ge h \right] = P\left(\frac{dQ}{dP} \ge h \right)
\nonumber
\end{align}
be the $P$-width of the $h$-superlevel set of $dQ/dP$.
Then, the channel simulation divergence of $Q$ from $P$ is
\begin{align}
\CSD{Q}{P} = -\int_{h = 0}^{\infty}w(h) \logtwo w(h) \, dh.
\label{eq:channel_simulation_divergence_def}
\end{align}
\end{definition}
The following lemma collects some of the most important properties of the divergence.
\begin{lemma}
\label{lemma:csd_properties}
Let $Q$ and $P$ be as in \Cref{def:channel_simulation_divergence}.
Then:
\begin{enumerate}[
label=(\Roman*),
]
\item{$\CSD{Q}{P} \geq 0$.
\hfill
\textbf{(Non-negativity)}} 
\item{$\CSD{Q}{P} = 0 \Leftrightarrow Q = P$.
\hfill
\textbf{(Positivity)}} 
\item{
$\CSD{Q}{P}$ is convex in $Q$ for fixed $P$.
\hfill
\textbf{(Convexity)}}
\item{
For $\kappa = \KLD{Q}{P}$, we have \hfill
\textbf{(KL bound)}}
\begin{align}
\kappa \leq \CSD{Q}{P} \leq \kappa + \logtwo(\kappa + 1) + 1. \nonumber
\end{align}
\item When $\Omega$ is discrete, for $x \in \Omega$ and setting $Q = \delta_x$ we have
${\CSD{Q}{P} = -\logtwo P(x)}$. \hfill
\textbf{(Self-information)}
\end{enumerate}
\end{lemma}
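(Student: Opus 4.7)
Properties (I) and (V) are immediate: (I) follows because $w(h) \in [0,1]$ makes $-w \logtwo w \geq 0$ pointwise; (V) is a one-line calculation, since for $Q = \delta_x$ with $p = P(x)$, $dQ/dP = p^{-1}\mathbf{1}_{\{\cdot = x\}}$ gives $w(h) = p$ on $(0, 1/p]$ and zero afterward, so $\CSD{Q}{P} = -\logtwo p$. For (II), vanishing of the non-negative integrand forces $w(h) \in \{0,1\}$ for almost every $h$; combined with $w$ being non-increasing and right-continuous with $w(0) = 1$, this pins $w = \mathbf{1}_{[0, h^*]}$ for some $h^* \geq 0$, and then $\int_0^\infty w\,dh = \Exp_P[dQ/dP] = 1$ forces $h^* = 1$, so $dQ/dP = 1$ $P$-almost surely, i.e.\ $Q = P$.

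The key ingredient for (IV) is the observation that, because $\int_0^\infty w\,dh = 1$, the function $w$ may be treated as the density of an auxiliary ``equilibrium'' random variable $\tilde Y \geq 0$. With this convention, $\CSD{Q}{P}$ is literally the differential entropy $h(\tilde Y)$ (in bits), while an integration by parts applied to $\Exp_P[(dQ/dP)\logtwo(dQ/dP)]$ yields $\KLD{Q}{P} = \logtwo e + \Exp[\logtwo \tilde Y]$. Introducing $S = \ln \tilde Y$ and applying the differential-entropy change-of-variables formula collapses both identities to
\begin{align*}
\CSD{Q}{P} - \KLD{Q}{P} = \frac{h_{\mathrm{nat}}(S) - 1}{\ln 2},
\end{align*}
where $h_{\mathrm{nat}}$ denotes differential entropy measured in nats. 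The crucial structural observation is that the non-increasingness of $w$ is equivalent to $g(s)e^{-s}$ being non-increasing in $s$ (with $g$ the density of $S$), and integrating this from $-\infty$ yields the pointwise bound $g(s) \leq \int_{-\infty}^s g(t)\,dt = G(s)$ -- the density of $S$ is dominated by its own CDF. Plugging into $h_{\mathrm{nat}}(S) = -\Exp[\ln g(S)] \geq -\Exp[\ln G(S)]$ and using $G(S) \sim \Unif(0,1)$ yields $h_{\mathrm{nat}}(S) \geq -\int_0^1 \ln u\,du = 1$, which establishes the lower bound in (IV).

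For the upper bound in (IV), a matching $h_{\mathrm{nat}}(S) \leq \ln(\KLD{Q}{P} \cdot \ln 2 + \ln 2) + O(1)$ is needed. My plan is to apply Gibbs' inequality to $g$ against a reference density $\rho$ that itself satisfies $\rho \leq G_\rho$; the logistic-type family $\rho(s) \propto e^s/(1 + \alpha e^s)^2$ is a natural candidate, with the scale parameter $\alpha$ chosen so that $\Exp_\rho[S]$ matches the target $\Exp[S] = \KLD{Q}{P} \cdot \ln 2 - 1$. Pinning down the announced constant $+1$ (rather than some larger additive slack) will hinge on this choice.

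The main obstacle is convexity (III). Concavity of $u \mapsto -u \logtwo u$ does not transfer to convexity of $\CSD$ in $Q$, because $w_Q(h) = P(dQ/dP \geq h)$ is non-linear in $Q$ -- simple two-atom examples already show that $w_{(1-\lambda)Q_0 + \lambda Q_1}(h)$ can be strictly smaller than $(1-\lambda) w_{Q_0}(h) + \lambda w_{Q_1}(h)$. My plan is to compute $\partial_\lambda^2 \CSD{(1-\lambda)Q_0 + \lambda Q_1}{P}$ directly. The second $\lambda$-derivatives of $\mathbf{1}[r_\lambda(x) \geq h]$ are distribution-valued, with Dirac masses concentrated on the graph $\lambda = (h - r_0(x))/(r_1(x) - r_0(x))$; nevertheless, after integrating over $P(dx)$ and $h$, the resulting Hessian kernel should reduce to a manifestly non-negative quadratic form in the joint law of $(r_0, r_1)$ under $P$. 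Handling these distributional derivatives rigorously is where the main technical work will sit, and I expect this to be the hardest part of the lemma.
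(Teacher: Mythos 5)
Parts (I), (II), (V) are fine and match the paper. Your lower bound in (IV) is also correct and is the paper's argument in different clothing: with $S=\ln\tilde Y$, your pointwise domination $g(s)\le G(s)$ is exactly the paper's inequality $h\,w(h)\le W(h)$ after the change of variables, and $-\Exp[\ln G(S)]=1$ is the paper's identity $\int_0^\infty w(h)\logtwo W(h)\,dh=-\logtwo e$, both combined with the same integration-by-parts identity $\KLD{Q}{P}=\logtwo e+\int_0^\infty w(h)\logtwo h\,dh$.

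The genuine gaps are (III) and the upper bound in (IV), both of which you only sketch. For the (IV) upper bound, the Gibbs-inequality route with $\rho(s)\propto e^s/(1+\alpha e^s)^2$ forces you to control $\Exp[\ln(1+\alpha e^S)]=\Exp[\ln(1+\alpha\tilde Y)]$, and the natural Jensen step in $\tilde Y$ needs $\Exp[\tilde Y]=\tfrac{1}{2}\Exp_P[(dQ/dP)^2]$ to be finite, which is not implied by finiteness of $\KLD{Q}{P}$; cruder bounds such as $\ln(1+\alpha h)\le\ln(1+\alpha)+\ln(1+h)$ lose an extra multiple of $\kappa$, so the stated $\kappa+\logtwo(\kappa+1)+1$ is not reached this way, and you yourself flag the constant as unresolved. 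The paper instead splits the defining integral at $h=1$, applies Jensen on $[1,\infty)$ to $\frac{1-W(H)}{H\,w(H)}$ (whose conditional expectation integrates to at most $\KLD{Q}{P}$, giving $T\logtwo\KLD{Q}{P}-T\logtwo T$ with $T=1-W(1)$), bounds the $[0,1)$ piece variationally by $-T\logtwo T-2S\logtwo S$, and closes with $t\logtwo a+2\Ent_b[t]\le\logtwo(a+1)+1$; some replacement for this combination of steps is needed in your plan.

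For (III), computing $\partial_\lambda^2$ of $\CSD{(1-\lambda)Q_0+\lambda Q_1}{P}$ involves distribution-valued derivatives of $\Ind[r_\lambda\ge h]$, requires regularity of the law of $(r_0,r_1)$ under $P$ that you have not assumed, and your assertion that the resulting Hessian kernel is a nonnegative quadratic form is not substantiated; as written, (III) is unproven. The paper's route avoids differentiation entirely: two integrations by parts give the representation $\CSD{Q}{P}\ln 2=-1+\int_0^1\int_0^\infty \min\{w(h),y\}/y\,dh\,dy$, so it suffices to show $y$-wise convexity of $\mu_y(w_Q)=\int_0^\infty\min\{w_Q(h),y\}\,dh$; this holds because $\mu_y(w_Q)=Q(A_Q)$ for a $P$-width-$y$ superlevel set $A_Q$ of $dQ/dP$, and this set maximizes $Q(A)$ over all $A$ with $P(A)=y$, so $\mu_y(w_Q)$ is a supremum of functionals linear in $Q$, hence convex. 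If you want to keep your outline, you would either have to carry out the distributional Hessian computation rigorously (including the degenerate cases $r_0=r_1$ on sets of positive measure and atoms of the law of $r_\lambda$) or switch to an argument of this superlevel-set type.
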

\begin{proof}
(I), (II):
First, observe that by definition ${0 \leq w(h) \leq 1}$ for all $h \geq 0$, with $w(0) = 1$ and decreasing in $h$.
Note also that $\int_0^\infty w(h)\,dh = \Exp_{X \sim P}[\nicefrac{dQ}{dP}(X)] = 1$, where the first equality follows from the Darth Vader rule \cite{muldowney2012darth}.
Hence, $w(h)$ is the probability density of a non-negative random variable $H$.
As $w(h) \leq 1$, we find ${\CSD{Q}{P} = \Exp_H[-\logtwo w(H)] \geq 0}$, with equality if and only if $w$ is identically $1$ on its entire support, implying ${w(h) = \Ind[0 \leq h \leq 1]}$.
However, ${w(1) = 1 \Leftrightarrow \nicefrac{dQ}{dP} \stackrel{a.e.}{=}1 \Leftrightarrow Q = P}$.
\vspace{0.2cm}
\par
(III):
%
%
We begin our proof by first showing the following integral representation of the channel simulation divergence:
\begin{align}
\CSD{Q}{P}\!\cdot\! \ln 2 &= -1 + \int_0^1 \int_0^\infty \frac{\min\{w(h), y\}}{y} \,dh \, dy.
\label{eq:channel_simulation_divergence_integral_rep}
\end{align}
To this end, let $\phi(x) \!=\! -x \ln x$ and ${\omega(y) \!=\! \int_0^\infty \Ind[w(h)\! \geq\! y]\, dh}$.
By \cref{eq:channel_simulation_divergence_def}, we also have ${\CSD{Q}{P}\cdot \ln 2 = -\int_0^1 \phi(y)\, d\omega(y)}$, which we integrate by parts twice to obtain \cref{eq:channel_simulation_divergence_integral_rep}.
%
%
\par
Fix distributions $P, Q_1, Q_2$ satisfying $Q_1, Q_2 \ll P$, some $\lambda \in [0, 1]$ and set $Q = \lambda Q_1 + (1 - \lambda) Q_2$.
Let $r_i = \nicefrac{dQ_i}{dP}$,
with corresponding $P$-width functions $w, w_1, w_2$.
Now, define $\mu_y(w) = \int_0^\infty \min\{w(h), y\} \, dh$.
Then, by \cref{eq:channel_simulation_divergence_integral_rep} it suffices to show that for all $y \in (0, 1)$:
\vspace{-0.1cm}
\begin{align}
\label{eq:convexity1}
\mu_y(w) \leq \lambda\mu_y(w_1) + (1 - \lambda)\mu_y(w_2).
\end{align}
%
%
%
Now fix $y \in (0, 1)$. 
Without loss of generality assume that $P$ is non-atomic.
For each $i$ we can find super-level set $A_i$ of $r_i$ with $P$-width $y$:
\vspace{-0.1cm}
\begin{align*}
    P(A_i) = y \ \ \ \ \ x_1 \in A_i, \, x_2 \in A_i^{c} \ \Rightarrow \ r_i(x_1) \ge r_i(x_2)
\end{align*}
One can verify that:
\vspace{-0.2cm}
\begin{align}
\label{eq:level_set}
\mu_y(w_i) = 1 -\int_{A_i^{c}} r_i \,dP.
\end{align}
Set $l_i = \inf_{A_i} r_i(x)$. For each $i = 1, 2$ we transform:
\vspace{-0.1cm}
\begin{align*}
\int_{A_i^c} r_i \,dP
&= \int_{A_i^c \cap A^c} r_i \,dP + \int_{A_i^c \cap A} r_i \,dP
\\
&\le \int_{A_i^c \cap A^c} r_i \,dP + l_i \cdot P(A_i^c \cap A)
\\
&\stackrel{(a)}{=} \int_{A_i^c \cap A^c} r_i \,dP + l_i \cdot P(A_i \cap A^c)
\\
&\le \int_{A_i^c \cap A^c} r_i \,dP + \int_{A_i \cap A^c} r_i \,dP
\\
&= \int_{A^c} r_i \,dP
\end{align*}
where equality (a) follows from $P(A_i) = P(A)$. Finally, noting that $r(x) = \lambda r_1(x) + (1-\lambda) r_2(x)$, the above in combination with \cref{eq:level_set} gives \cref{eq:convexity1}.
\vspace{0.0cm}
\\
\textit{Remark.} Claims (I -- III) can be shown for a broad generalization of $\CSD{Q}{P}$; we provide the details in Appendix~\ref{appendix:csd_convexity}.
\vspace{-0.3cm}
\par
(IV): Before we show the bound, we first note that 
\begin{align}
\KLD{Q}{P} &= \int_\Omega \frac{dQ}{dP} \logtwo \frac{dQ}{dP}\, dP \nonumber\\
&= -\int_0^\infty h \logtwo h \, dw(h) \nonumber\\
&= \logtwo e + \int_0^\infty w(h) \logtwo h\,dh,
\label{eq:kl_width_identity}
\end{align}
where the second equality follows by jointly integrating over points where $dQ/dP = h$ and the third equality follows from integration by parts. 
\par
We will now focus on showing the lower bound. 
First, define $W(h) = \int_0^h w(\eta)\,d\eta = \Prob[H \leq h]$.
Note, that since $w(h)$ is decreasing, ${W(h) \geq h \cdot w(h)}$, thus $\logtwo h \leq -\logtwo w(h) + \logtwo W(h)$.
Substituting this into \cref{eq:kl_width_identity} and noting that $\int_0^\infty w(h) \logtwo W(h) \,dh = \int_0^1 \logtwo u \, du = -\logtwo e$, we obtain the desired lower bound.
\par
Next, we show the upper bound.
We begin by noting that $-\logtwo e = \int_0^\infty w(h) \logtwo (1 - W(h))\, dh$.
Then,
\vspace{-0.1cm}
\begin{align}
\CSD{Q}{P} - \KLD{Q}{P}
= \int_{h = 0}^{\infty} w(h) \logtwo\left(\frac{1 - W(h)}{h \cdot w(h)} \right) \,dh
\nonumber
\end{align}
We now split the integral into two integrals, one with domain of integration $[0, 1)$ representing the behavior of $w$ when $dQ/dP < 1$ and a second with $[1, \infty)$, representing $dQ/dP \geq 1$.  
We now bound the two integrals separately.
\par
We observe, that the integral on $[1, \infty)$ can be seen as a scaled expectation over the random variable $H$ truncated to $[1, \infty)$.
Now, setting $T = \Prob[H \geq 1] = 1 - W(1)$, applying Jensen's inequality, and integrating by parts, we get
\begin{align}
T\Exp_{H \mid H \geq 1}&\left[\logtwo \frac{1 - W(H)}{H\cdot w(H)}\right] \leq T \logtwo \Exp_{H \mid H \geq 1}\left[\frac{1 - W(H)}{H\cdot w(H)}\right] \nonumber\\
&\,\,\,= T \logtwo\left(\int_1^\infty w(h)\logtwo h\, dh\right) - T\logtwo T
\nonumber
\\
& \stackrel{\text{\cref{eq:kl_width_identity}}}{\le} T \logtwo\KLD{Q}{P} - T\logtwo T.
\label{eq:upper_csd_kld_integral_bound}
\end{align}
Now, we focus on bounding the first integral, over $[0, 1)$.
Let $\mathcal{W}\! =\! \{\omega\! \in\! \mathcal{L}_1([0, 1])\mid 0\! \leq\! \omega\! \leq 1, \norm{\omega}_1\! =\! S, w' \leq 0\}$, 
where $S\! =\! W(1)$ and, as a slight abuse of notation, $\mathcal{L}_1([0, 1])$ denotes the Lebesgue integrable functions on $[0, 1]$.
Then:
\begin{align}
\int_0^1\!\! w(h)\!\ln \frac{1}{h} \, dh\! &\leq \!\sup_{\omega \in \mathcal{W}} \int_0^1 \!\!\omega(h)\ln \frac{1}{h} \, dh
\!=\! S(1\! -\! \ln S).
\label{eq:partial_kl_integral_lower_bound}
\end{align}
Also, note that for $U \sim \Unif(0, 1)$ by Jensen's inequality,
\begin{align}
-\int_{0}^{1}\!\! w(h) \ln w(h) \,dh = \Exp[-w(U)\ln w(U)] \le -S \ln S.
\label{eq:xlogx_integral_bound_jensen}
\end{align}
Finally, combining $\int_0^1 w(h) \ln(1 - W(h))\,dh = -T \ln(T) - S$ with \cref{eq:partial_kl_integral_lower_bound,eq:xlogx_integral_bound_jensen}, we get
\begin{align}
\int_0^1 w(h) \logtwo \frac{1 - W(h)}{h \cdot w(h)}\, dh \leq -T \logtwo T - 2 S \logtwo S.
\label{eq:difference_integral_bound_on_0_1}
\end{align}
Putting the upper bounds in \cref{eq:upper_csd_kld_integral_bound,eq:difference_integral_bound_on_0_1} together, we get
\begin{align}
\CSD{Q}{P} \leq \KLD{Q}{P} + T \logtwo\KLD{Q}{P} + 2\Ent_b[T].
\nonumber
\end{align}
Finally, for $0 \leq t \leq 1$ and $a > 0$ we apply the inequality $t \logtwo(a) + 2\Ent_b[t] \leq \logtwo (a + 1) + 1$, which finishes the proof.
We note that the above upper bound can be improved using variational methods.
\vspace{0.1cm}
\par
(V): follows from direct calculation.
\end{proof}
\vspace{-0.1cm}
\noindent
In Appendix~\ref{appendix:alternative_csd}, we provide some additional results for $\CSD{Q}{P}$ as well as define and analyze a related divergence.
\section{The Codelength of Causal Rejection Samplers}
\label{sec:codelength}
\noindent
We are now ready to present our two main results on CRSs.
First, we show that the codelength of a CRS is always bounded from below by the channel simulation divergence.
\begin{theorem}
\label{thm:crs_index_entropy_lower_bound}
Let $K$ be the index returned by any CRS. 
Then:
\begin{align}
\CSD{Q}{P} \leq \Ent[K] \nonumber
\end{align}
\end{theorem}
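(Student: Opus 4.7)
The plan is to match the discrete code $K$ against a ``layered'' parameterization of $Q$, using the CRS's causality to obtain a decomposition of $dQ/dP$ that interfaces with the integral form of $\CSD{Q}{P}$.

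First I would extract the structural identity implied by causality. Setting $\tau_k = \Prob[K \ge k]$ and $\bar\beta_k(x) = \Prob[\text{accept at step }k \mid K \ge k, X_k = x]$, the independence of $\{K \ge k\}$ from $X_k$ (as exploited in \Cref{thm:runtime_lower_bound}) together with $X_K \sim Q$ yields the pointwise identity
\begin{align*}
\frac{dQ}{dP}(x) = \sum_{k \ge 1} \tau_k \bar\beta_k(x) \qquad P\text{-a.e.,}
\end{align*}
with $\bar\beta_k \in [0, 1]$, which encodes the causality constraint in its strongest form.

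Next I would match each step $k$ to a quantile interval of $w$. Define $L_k \in [0, \infty]$ by $\int_{L_k}^\infty w(h)\, dh = \tau_k$, so $L_1 = 0$ and $\int_{L_k}^{L_{k+1}} w(h)\, dh = p_k$. Writing $\Delta_k = L_{k+1} - L_k$ and applying Jensen's inequality (concavity of $\phi(y) = -y \log_2 y$) on each interval gives
\begin{align*}
-\int_{L_k}^{L_{k+1}} w \log_2 w \, dh \le -p_k \log_2 p_k + p_k \log_2 \Delta_k.
\end{align*}
Summing over $k$ produces $\CSD{Q}{P} \le \Ent[K] + \sum_k p_k \log_2 \Delta_k$.

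The main obstacle is the residual bound $\sum_k p_k \log_2 \Delta_k \le 0$, equivalently $\prod_k \Delta_k^{p_k} \le 1$. Monotonicity of $w$ alone only gives $\Delta_k \le p_k / w(L_{k+1})$, which by itself produces a circular estimate involving $\CSD{Q}{P}$; the causal identity from the first step must do genuine work here, tying $w(L_k)$ back to $\tau_k$. A viable alternative I would pursue in parallel is a coupling argument: introduce an auxiliary $U \sim \Unif(0,1)$ independent of the CRS's randomness and set $H := U \cdot \tfrac{dQ}{dP}(X_K)$. A short computation via the layer-cake formula shows that $H$ has density $w$, so that $h(H) = \CSD{Q}{P}$, and the chain rule for mixed discrete/continuous entropy gives
\begin{align*}
\Ent[K] - \CSD{Q}{P} = \Ent[K \mid H] - h(H \mid K),
\end{align*}
reducing the task to the mixed-entropy inequality $\Ent[K \mid H] \ge h(H \mid K)$, where once again the causal density decomposition should control the differential entropy on the right.
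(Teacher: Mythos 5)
Both of your routes stop exactly where the real work begins, so as it stands this is a proof sketch with a genuine gap rather than a proof. In route 1, everything up to $\CSD{Q}{P} \leq \Ent[K] + \sum_k p_k \logtwo \Delta_k$ is fine (the identity $\nicefrac{dQ}{dP} = \sum_k \tau_k \bar\beta_k$ and the per-interval Jensen step are correct), but the residual claim $\sum_k p_k \logtwo \Delta_k \leq 0$ is precisely where the stopping-time structure has to enter, and you give no argument for it. Note also that it is only a \emph{sufficient} condition: your Jensen step discards information, so $\sum_k p_k\logtwo\Delta_k\le 0$ is a possibly strictly stronger statement than the theorem, and it is not clear it even holds for every CRS (one can check $\Delta_1\le 1$ from $p_1\le\int_0^1 w$, but controlling later intervals already requires the same kind of cumulative constraints $\Prob[K\le k]\le\int_0^{\sum_{j\le k}\tau_j}w(h)\,dh$ that you would then still have to convert into the weighted-log bound). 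The paper avoids this global matching entirely: its key lemma is a one-step monotonicity statement for $\Prob[K\ge k]\left(\Ent[K\mid K\ge k]-\CSD{Q_k}{P}\right)$, proved from (i) convexity of $D_{CS}$ in the target, (ii) the causality fact that $X_k\mid K\ge k\sim P$, which forces $\nicefrac{d\widetilde Q}{dP}\le q^{-1}$ for $\widetilde Q\sim X_K\mid K=k$, and (iii) Jensen giving $\CSD{\widetilde Q}{P}\le-\logtwo q$; a truncation/limiting argument then handles unbounded $K$. Your missing residual bound would have to repackage essentially this same content.

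Route 2 is an elegant reformulation but, as written, circular. The observation that $H=U\cdot\frac{dQ}{dP}(X_K)$ has density $w$, so that $\CSD{Q}{P}$ is the differential entropy of $H$, is correct and worth keeping. However, the chain-rule identity $\Ent[K]-\CSD{Q}{P}=\Ent[K\mid H]-h(H\mid K)$ (valid when the mutual information is finite, which itself needs a word) means that the ``reduced'' inequality $\Ent[K\mid H]\ge h(H\mid K)$ is literally the theorem restated; saying that the causal decomposition ``should control'' $h(H\mid K)$ does not supply the argument. To make either route work you must inject the conditional-distribution consequence of causality (that the not-yet-accepted sample is still $P$-distributed, hence the per-step acceptable density is bounded by $\Prob[K=k\mid K\ge k]^{-1}$), which is exactly the ingredient the paper's Lemma uses and your write-up never exploits quantitatively.
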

\begin{proof}
The key step in the proof is the following lemma:
\begin{lemma}
\label{lemma:index_entropy_csd_difference}
For some $k \in \Ints^+$, let $Q_k$ be the distribution of $X_K | K \ge k$. Then:
\vspace{-0.1cm}
\begin{align*}
\mathbb{P}[K &\ge k] \left(\Ent[K | K \ge k] - \CSD{Q_k}{P}\right)
\\
&\quad \ge \mathbb{P}[K \ge k + 1] \left(\Ent[K | K \ge k + 1] - \CSD{Q_{k + 1}}{P} \right)
\end{align*}
\end{lemma}
\begin{proof}
Set $q = \mathbb{P}(K = k | K \ge k) = 1 - p$. Expanding the Shannon entropy, we get:
\begin{align}
    \label{Shannon_expanded}
    \Ent[K | K \ge k] = \Ent_b[q] + p \Ent[K | K \ge k + 1]
\end{align}
Let $\widetilde{Q} \sim X_K | K = k$.
Then $Q_k = p Q_{k + 1} + q \widetilde{Q}$, thus by the convexity of the channel simulation divergence we obtain
\begin{align}
    \label{DCS_convex}
    \CSD{Q_k}{P} \le p \CSD{Q_{k + 1}}{P} + q \CSD{\widetilde{Q}}{P}
\end{align}
As $K$ is a stopping time, we find $P \sim X_k | K \ge k$. It follows that $\nicefrac{d\widetilde{Q}}{dP} \le q^{-1}$ holds $P$ - almost everywhere. Thus defining $\widetilde{w}(h) = P\left(\nicefrac{d\widetilde{Q}}{dP} \ge h \right)$ we find $\widetilde{w}(h) = 0$ for all $h > q^{-1}$. Setting $\phi(x) = -x \logtwo(x)$ by Jensen's inequality:
\vspace{-0.2cm}
\begin{align*}
\CSD{\widetilde{Q}}{P} &= \int_{0}^{q^{-1}}{\phi(\widetilde{w}(h)) \,dh}
\\
&\le q^{-1} \phi\left(q \int_{0}^{q^{-1}} \widetilde{w}(h) \,dh\right)
\\
&= -\logtwo(q)
\end{align*}
Combining the above with \cref{Shannon_expanded,DCS_convex} and $\Ent_b[q] \ge -q \logtwo(q)$ gives the thesis.
\end{proof}
\vspace{-0.1cm}
Iterating the \Cref{lemma:index_entropy_csd_difference} allows us to prove the theorem for any CRS with uniformly bounded stopping time $K$. 
The idea for the remaining part of the proof is to take terminated stopping times.
Let $\min\{K, L\}$ be the $L$-th terminated stopping time. Set $p_L = \mathbb{P}[K \le L]$. One can check that in limit:
\begin{align}
\label{eq:Shannon_limit}
\Ent[\min\{K, L\}] \rightarrow \Ent[K] \quad\text{as}\quad L \rightarrow \infty
\end{align}
Let $X_{\min\{K, L\}} \sim Q_L$. 
Since $\mathbb{P}[\min\{K, L\} \ge L + 1] = 0$, using \Cref{lemma:index_entropy_csd_difference} we find:
\vspace{-0.1cm}
\begin{align*}
\Ent[\min\{K, L\}] \ge \CSD{Q_L}{P}
\end{align*}
Finally, in Appendix~\ref{appendix:causal_rejection_samplers}, we show
\vspace{-0.1cm}
\begin{align}
\label{eq:last_limit}
\limsup_{L \rightarrow \infty} \CSD{Q_L}{P} \ge \CSD{Q}{P},
\end{align}
which, taken together with \cref{eq:Shannon_limit} finishes the proof.
\end{proof}
\vspace{-0.1cm}
As our second main result, we show that the lower bound in \Cref{thm:crs_index_entropy_lower_bound} is achieved by \Cref{alg:grs}.
\begin{theorem}
\label{thm:grs_achievability}
Let $K$ be the index \Cref{alg:grs} returns. 
Then:
\begin{align*}
\Ent[K] \leq \CSD{Q}{P} + \logtwo(e + 1).
\end{align*}
\end{theorem}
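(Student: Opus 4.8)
The plan is to exploit the explicit geometry of \Cref{alg:grs}. Write $\phi(x)=-x\logtwo x$ and keep the notation of \Cref{def:channel_simulation_divergence}, so that $W(h)=\int_0^h w(\eta)\,d\eta$. The first step is to identify, for the quantities $L_k,S_k$ produced by the algorithm, the relations $S_k = 1-W(L_k)=\int_{L_k}^{\infty}w\,dh$ and $L_{k+1}=L_k+S_k$: the former follows from the same level-set computation (plus the Darth Vader identity) used in the proof of \Cref{lemma:csd_properties}(I), applied to $\nicefrac{dQ}{dP}$ restricted to $\lvlset_k$, and the latter is just the update rule. Computing the acceptance probability at step $k$ — integrating $\beta_k$ against $P$ and splitting the domain according to whether $\nicefrac{dQ}{dP}(X_k)$ lies below $L_k$, in $[L_k,L_{k+1})$, or above $L_{k+1}$, and simplifying with $S_k=Q(\lvlset_k)-L_kP(\lvlset_k)$ — yields $\Exp_{X\sim P}[\beta_k]=1-S_{k+1}/S_k$, hence $\Prob[K\ge k]=S_k$ and $\Prob[K=k]=S_k-S_{k+1}=:m_k$. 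Consequently the intervals $I_k:=[L_k,L_{k+1})$ partition $[0,\infty)$, each has length $|I_k|=S_k$, and $m_k=\int_{I_k}w\,dh$.

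Second, I would recast the target inequality analytically. Since $\Ent[K]=\sum_k\phi(m_k)$ and $\CSD{Q}{P}=\sum_k\int_{I_k}\phi(w)\,dh$, and since the claim is trivial when $\CSD{Q}{P}=\infty$, it suffices to show
\begin{align}
\sum_k \Big(\phi\big(\textstyle\int_{I_k}w\,dh\big)-\int_{I_k}\phi(w)\,dh\Big)\ \le\ \logtwo(e+1).\nonumber
\end{align}

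Third — the crucial reduction — I would replace $w$ by a step function without decreasing the left-hand side. On each $I_k$ the decreasing function $w$ takes values in $[c_{k+1},c_k]$ with $c_k:=w(L_k)$, and by concavity its graph lies above the chord of $\phi$ over $[c_{k+1},c_k]$; integrating and using $\int_{I_k}w=m_k$ gives $\int_{I_k}\phi(w)\,dh\ge \alpha_k\phi(c_k)+\beta_k\phi(c_{k+1})$ for some $\alpha_k,\beta_k\ge0$ with $\alpha_k+\beta_k=S_k$ and $\alpha_kc_k+\beta_kc_{k+1}=m_k$. The right-hand side equals $\int_{I_k}\phi(\widehat w)\,dh$ for the step function $\widehat w$ that is $c_k$ on the first $\alpha_k$ of $I_k$ and $c_{k+1}$ on the rest; since $\int_{I_k}\widehat w=m_k$, the $\{I_k\}$ are also the GRS intervals of $\widehat w$, so $\Ent[\widehat K]=\Ent[K]$ while $\CSD{\widehat Q}{P}\le\CSD{Q}{P}$. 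Hence it is enough to bound $\Ent[K]-\CSD{Q}{P}$ for step-function targets. For those, one then argues that the supremum is attained, in a limiting sense, by a two-level family $w=\Ind_{[0,a)}+\epsilon\,\Ind_{[a,\,a+(1-a)/\epsilon)}$ with $\epsilon\downarrow0$ (levels close to their neighbours contribute negligibly to the excess and may be merged away). A direct computation for this family — in which $\CSD{Q}{P}=(1-a)\logtwo(1/\epsilon)+o(1)$, and, using $\sum_{j\ge1}jx^{j}=x/(1-x)^2$ together with $-\ln(1-\epsilon)/\epsilon\to1$, $\Ent[K]=(1-a)\logtwo(1/\epsilon)+\Ent_b[a]+(1-a)\logtwo e+o(1)$ — gives $\Ent[K]-\CSD{Q}{P}\to\Ent_b[a]+(1-a)\logtwo e$, whose maximum over $a\in[0,1]$ is attained at $a=1/(1+e)$ and equals $\logtwo(e+1)$.

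The main obstacle is the third step, specifically the passage from an arbitrary step function to the two-level family: turning "merge negligible levels" into a rigorous exchange/monotonicity argument, handling the fact that the step function produced by the chord estimate may have countably many levels, and that the extremal configuration is degenerate ($\epsilon\to0$). The first two steps and the final explicit optimization are comparatively routine. It is worth noting why one cannot instead bound "obvious" sub-pieces separately — e.g. peeling off a $\logtwo e$ term via the Riemann-sum comparison $\sum_k(S_k-S_{k+1})\logtwo(1/S_k)\le\int_0^1\logtwo(1/s)\,ds=\logtwo e$ and a remaining "within-interval" term: each such piece can individually exceed its share of $\logtwo(e+1)$, so the trade-off forced by the greedy partition must be tracked globally, which is precisely what the reduction to two levels achieves.
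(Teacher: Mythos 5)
Your steps 1--2 and the first half of step 3 are sound: the identities $\Prob[K\ge k]=S_k$, $L_{k+1}=L_k+S_k$, $\Prob[K=k]=\int_{I_k}w\,dh$ are exactly what the paper uses (Lemma III.1 of \cite{flamich2023adaptive}), and your chord/concavity argument replacing $w$ on each $I_k$ by a two-valued step function with the same integral is a correct reduction (it is the same majorization inequality the paper invokes in Appendix~E, and your observation that the GRS partition and hence $\Ent[K]$ are unchanged, while $\CSD{Q}{P}$ can only decrease, is valid because $S_{k+1}=\int_{L_{k+1}}^\infty w\,dh$ depends on $w$ alone). You also correctly identified the extremal two-level family and verified that it yields the constant $\logtwo(e+1)$ at $a=1/(1+e)$ -- but that family is the paper's Appendix~F example showing the constant cannot be \emph{improved}; it does not by itself give the upper bound.

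The genuine gap is the passage from arbitrary (countably-many-level) step functions to the two-level family: the claim that ``levels close to their neighbours contribute negligibly and may be merged away'' is not an argument, and it is precisely where the whole difficulty of the theorem sits. Merging two levels changes $w$ globally under the GRS dynamics -- every subsequent $L_k,S_k$ and hence every later term $\phi(m_k)-\int_{I_k}\phi(w)\,dh$ changes -- so there is no evident monotonicity of the excess under such an exchange; moreover the supremum is not attained (your extremal configuration is a degenerate $\epsilon\downarrow 0$ limit), so one cannot take a maximizer and characterize it by first-order conditions without a compactness/limiting analysis you have not supplied. As your own closing remark notes, per-interval excesses can exceed their ``share'' of $\logtwo(e+1)$, so the bound must be tracked globally; the paper does this not by solving the variational problem you pose, but by an induction over $k$ carrying the potential term $S_k\bigl(\Ent[K\mid K\ge k]-\logtwo(e+w(L_k))\bigr)$, whose inductive step reduces to a single-interval inequality for decreasing $f:[0,1]\to[m,M]$, namely $\Ent_b[I]-\logtwo(e+M)\le\int_0^1\phi(f)\,dx-(1-I)\logtwo(e+m)$, proved in Appendix~E via majorization and an explicit three-variable local-maximum analysis. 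Until you replace the ``merge'' heuristic with an argument of comparable substance (or adopt such a telescoping potential), the decisive step of the proof is missing.
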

\begin{proof}
Set $w(h) = P\left(\nicefrac{dQ}{dP} \ge h \right)$. 
Let $S_k$ and $L_k$ be as defined by \Cref{alg:grs}.
Let ${q_k = \int_{0}^{1}{w(S_k h + L_k) \,dh}}$.
One can check that ${L_{k + 1} = L_k + S_k}$ and ${S_{k + 1} = S_k \cdot (1 - q_k)}$.
By Lemma III.1 of \cite{flamich2023adaptive}, $S_k = \Prob[K \geq k]$
and hence we have ${p_k = \mathbb{P}[K = k] = S_k \cdot q_k}$.
%
The motivation behind the above notation is that it allows us to express the $P$-width function of ${Q_k \sim X_K \mid K \ge k}$ as ${P(\nicefrac{dQ_k}{dP} \ge h) = w(S_k h + L_k)}$, which can be shown via direct calculation.
\par
Set $\phi(x) = -x\logtwo(x)$. The key step is to show for $k \ge 0$:
\begin{align}
\label{ineq:GRS_main_step}
\begin{split}
\Ent[K] \le &\logtwo(e + 1) + \int_{0}^{L_k}{\phi(w(h)) \, dh}
\\
&+ S_k \left(\Ent[K | K \ge k] - \logtwo(e + w(L_k))\right)
\end{split}
\end{align}%
We proceed by induction. 
The base case $k = 1$ is trivial. 
Now fix some $k \ge 1$. 
For the inductive step, we need to prove:
\begin{align}
\: S_k &\left(\Ent[K | K \ge k] - \logtwo(e + w(L_k))\right) - \int_{L_k}^{L_{k+1}}{\phi(w(h)) \,dh}
\nonumber \\
&\,\,\,\le S_{k+1} \left(\Ent[K | K \ge k+1] - \logtwo(e + w(L_{k+1}))\right)
\label{ineq:GRS_inductive_step}
\end{align}
Expanding the Shannon entropy:
\vspace{-0.1cm}
\begin{align}
    \label{eq:GRS_expand_entropy}
    \Ent[K | K \ge k] = \Ent_b[q_k] + (1 - q_k) \Ent[K | K \ge k + 1]
\end{align}
Furthermore, we note that:
\begin{align}
    \label{eq:GRS_expand_integral}
    \int_{L_k}^{L_{k+1}}{\phi(w(h)) \,dh} = S_k \int_{0}^{1}\phi(w(S_k h + L_k)) \,dh
\end{align}
Utilizing \cref{eq:GRS_expand_entropy,eq:GRS_expand_integral}, we transform \cref{ineq:GRS_inductive_step} into:
\begin{align}
\label{eq:GRS_transformed_inductive_step}
\Ent[q_k] &- \logtwo(e + w(L_k))
\\
\le&\: \int_{0}^{1}{\phi(w(S_k h + L_k)) \,dh} - (1-q_k) \logtwo(e + w(L_{k + 1}))
\nonumber
\end{align}
Set $f(x) = w(S_k x + L_k)$. Let $m = f(1)$ and $M = f(0)$.
Note, that since $w$ is decreasing, $m$ and $M$ are the minimum and the maximum of $f$ on $[0, 1]$, respectively. 
Now, setting $I = \int_{0}^{1} f(x) \,dh = q_k$, ineq. (\ref{eq:GRS_transformed_inductive_step}) becomes:
\begin{align}
\Ent_b[I]\! -\! \logtwo(e + M) 
\!\le\! \int_{0}^{1}\!\!{\phi(f(x)) \,dx}\! -\! (1 - I) \logtwo(e + m),
\nonumber
\end{align}
which we show to hold in Appendix~\ref{appendix:causal_rejection_samplers}.
\par
For the rest of the proof we assume that $\Ent[K] < \infty$. Having shown ineq. (\ref{ineq:GRS_main_step}), our final step is to take limit as $k \rightarrow \infty$. Note that $S_k \rightarrow 0$, thus $S_k \logtwo(e + w(L_k)) \rightarrow 0$. 
Furthermore:
\begin{align*}
    S_k \Ent[K | K \ge k] = -\phi(S_k) + \sum_{i = k}^{\infty} \phi(p_i)
\end{align*}
Since $\sum_{i = 1}^{\infty} \phi(p_k) = \Ent[K] < \infty$, the above tends to zero. 
Now, taking $k \rightarrow \infty$ in ineq. (\ref{ineq:GRS_main_step}) finishes the proof.
\end{proof}
We provide some additional related results in Appendix~\ref{appendix:grs_additional_results}.
\section{Laplace and Gaussian Channel Simulation}
\noindent
This section briefly investigates the behaviour of $\CSD{Q}{P}$ when $Q$ and $P$ are both Laplace or Gaussian distributions.
In particular, note that we can obtain a joint bound by combining part (IV) of \Cref{lemma:csd_properties} with \Cref{thm:crs_index_entropy_lower_bound,thm:grs_achievability}: 
\begin{align}
&\KLD{Q}{P}\! \leq\! \CSD{Q}{P}\! \leq\! \Ent[K]\! \leq\! \CSD{Q}{P} + \logtwo(e + 1) \nonumber\\
&\,\,\, \leq\! \KLD{Q}{P}\! +\! \logtwo(\KLD{Q}{P} \!+\! 1)\! +\! \logtwo(e\! +\! 1)\! +\! 1
\label{eq:joint_sandwich_bound}
\end{align}
Hence, by computing $\CSD{Q}{P}$, not only do we get a very precise estimate on the optimal performance of CRS for channel simulation, but we can also gauge how tight the $\KLD{Q}{P}$-based bound is as well.
To this end, let us define ${\Delta(Q, P) = \CSD{Q}{P} - \KLD{Q}{P}}$, which is the main quantity we plot in \cref{fig:numerical_results}.
\par
\textit{One-dimensional Laplace channel simulation.}
Let $0 \!< \!b \!\leq\! 1$, and set $Q = \Laplace(0, b)$ and $P = \Laplace(0, 1)$.
Then, using the formula for $w$ in this case (Appendix G.5 of \cite{flamich2023gprs}), we find that
\begin{align}
\CSD{Q}{P} \cdot \ln 2 = b + \psi(1 / b) + \gamma - 1,
\nonumber
\end{align}
where $\psi$ is the digamma function and $\gamma$ is the Euler-Mascheroni constant.
Using $-x \leq \psi(1/x) + \ln x \leq -x / 2$ for $x > 0$ from eq.\ 2.2 in \cite{alzer1997some}, this shows that
\begin{align}
\gamma - b \leq \Delta(Q, P) \cdot \ln 2 \leq \gamma - b / 2.
\nonumber
\end{align}
We plot the exact difference in \cref{fig:numerical_results} (A), which together with the above demonstrates that in this case, the $\KLD{Q}{P}$-based upper bound in \cref{eq:joint_sandwich_bound} can be tightened to $\KLD{Q}{P} + \Oh(1)$.
\par
\textit{Asymptotic Gaussian channel simulation.}
Next, we investigate the case when $Q = \Normal(\mu, \sigma^2)^{\otimes d}$ and $P = \Normal(0, 1)^{\otimes d}$ as $d \to \infty$, when $0 < \sigma < 1$.
Sadly, there does not appear to be an analytic formula for $\CSD{Q}{P}$, even when $\mu = 0$.
However, as \cref{eq:channel_simulation_divergence_def} is an integral over $\Reals^+$, it lends itself nicely to numerical quadrature.
Surprisingly, as \cref{fig:numerical_results} (B) shows,
\begin{align}
\Delta(Q, P) \approx \nicefrac{1}{2} \cdot \logtwo(\KLD{Q}{P} + 1),
\label{eq:conjectured_gaussian_kl_scaling}
\end{align}
which shows that neither the lower nor the upper bound based on $\KLD{Q}{P}$ in \cref{eq:joint_sandwich_bound} is tight!
\begin{figure}[t]
\centering
\ref{legend:experiments}
\\
\includegraphics{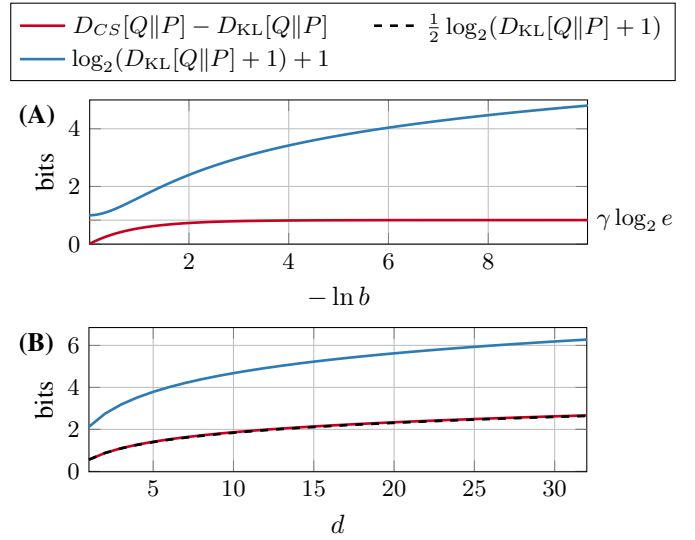}
\vspace{-0.6cm}
\caption{Numerical demonstration of the looseness of the channel simulation bound in \cref{eq:joint_sandwich_bound}.
\textbf{(A)} We plot $\Delta(Q, P)$ for $Q = \Laplace(0, b)$ and $P = \Laplace(0, 1)$ as a function of the target log-precision $-\ln b$.
\textbf{(B)} We plot $\Delta(Q, P)$ for $Q = \Normal(1, 1/4)^{\otimes d}$ and $P = \Normal(0, 1)^{\otimes d}$ as a function of the dimension $d$.
}
\label{fig:numerical_results}
\vspace{-0.2cm}
\end{figure}
\section{Discussion and Future Work}
\noindent
In this paper, we analyzed the expected runtime and codelength of an important class of channel simulation algorithms, called causal rejection samples.
In particular, we showed that for a target distribution $Q$ and proposal $P$, the expected runtime of any causal rejection sample must scale at least as $\exp(\infD{Q}{P})$.
Furthermore, we showed that their expected codelength is bounded below by the channel simulation divergence $\CSD{Q}{P}$, a quantity we introduced and analyzed in the paper.
We also showed that this lower bound is achievable within constant number of bits.
Finally, we gave precise numerical estimates of the achievable codelength for certain Laplace and Gaussian channel simulation problems, which suggested surprising $\KLD{Q}{P}$-based scaling too.
\par
Our work suggests several directions for future exploration:
First, we conjecture that  \Cref{thm:runtime_lower_bound,thm:crs_index_entropy_lower_bound} should hold for Harsha et al.'s general, non-causal sampling framework \cite{harsha2007communication}.
Second, we ask if $\CSD{Q}{P}$ can be used to give lower bounds for the $\alpha$-fractional moments of the index returned by a CRS for $\alpha \in (0, 1)$, akin to Theorem 3 of \cite{liu2018rejection}. 
Finally, we leave a rigorous proof of \cref{eq:conjectured_gaussian_kl_scaling} for future work.

\bibliographystyle{IEEEtran}
\bibliography{references}

\clearpage
\appendices

Throughout the appendices, we always assume that $P, Q$ are the proposal and target distributions respectively, with $Q \ll P$.

\section{Expanded Definitions}
\label{appendix:extended_results}

First we generalize the notion of CRS.
\begin{definition}[Expanded Causal Rejection Sampler]
    Let $I$ be a countable set and $\{X_i: i \in I\}$ a collection of random variables. Let $N_1, N_2, ... , N_K$ be a random finite sequence taking pairwise different values from set $I$, with the property that $X_{N_K} \sim Q$. Furthermore assume that for any $k$ and sequence $n_1, \dots , n_k$, conditional on $K \ge k$ and $N_1 = n_1, \dots , N_K = n_k$, the distribution of $X_{N_k}$ is $P$. We call such system an expanded causal rejection sampler with proposal $P$ and target $Q$.
\end{definition}

The interpretation of such defined Causal Rejection Sampler is as follows: There is a player with some strategy and a set of variables $\{X_i: i \in I\}$. At any point in time, the player chooses one new index $N_k$ and looks up a variable $X_{N_k}$, the distribution of which follows $P$. Once the player reads the value $X_{N_k}$, he can either choose to output the index $N_k$ - thus setting $K = k$ and terminating sequence $N_1, ... , N_K$ - or he can continue on to sample new variables.

The revitalized definition of Causal Rejection Sampler has 2 key advantages: firstly, it generalizes the traditional notion of Causal Rejection Sampler defined using stopping time, as introduced in \cite{liu2018rejection}. Indeed, we have:

\begin{theorem}
    Let $\{X_i: i \in \mathbb{N} \}$ be i.i.d. variables with distribution $P$. Let $K$ be a finite stopping time adapted to $X_1, X_2, \dots$. Then $N_1 = 1, N_2 = 2, \dots , N_K = K$ is a Causal Rejection Sampler.
\end{theorem}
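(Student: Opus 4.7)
The plan is to verify each clause of the Expanded CRS definition in order, treating the deterministic indexing $N_i = i$ as the main simplification that makes everything transparent.

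First I would observe that the random sequence $N_1 = 1, N_2 = 2, \dots, N_K = K$ is well-defined and finite (since $K$ is almost surely finite by assumption) and consists of pairwise different elements of $\mathbb{N}$. The terminal condition $X_{N_K} = X_K \sim Q$ is precisely the hypothesis imposed on the original CRS, so that clause is immediate.

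The substantive step is the conditional distribution requirement. Fix $k \geq 1$ and a sequence $n_1, \dots, n_k$; the only sequence of positive probability under our construction is $n_i = i$ for $1 \leq i \leq k$, so the conditioning event reduces to $\{K \geq k\}$. Because $K$ is a stopping time adapted to the natural filtration $\mathcal{F}_j = \sigma(X_1, \dots, X_j)$, the event $\{K \leq k-1\}$ lies in $\mathcal{F}_{k-1}$, and hence so does its complement $\{K \geq k\}$. By the i.i.d.\ assumption, $X_k$ is independent of $\mathcal{F}_{k-1}$, so in particular it is independent of $\{K \geq k\}$. Therefore the conditional law of $X_{N_k} = X_k$ given $\{K \geq k, N_1 = 1, \dots, N_k = k\}$ coincides with its unconditional law, namely $P$, which is exactly what the definition requires.

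There is no real obstacle here; the whole argument is just a careful unpacking of the stopping-time definition combined with independence. The only point worth being cautious about is the logical content of the conditioning event: one must note that under the deterministic choice $N_i = i$, conditioning on the indices $N_1, \dots, N_k$ is vacuous beyond conditioning on $\{K \geq k\}$, so that the desired claim reduces to the standard fact that $X_k$ is independent of any event in $\mathcal{F}_{k-1}$.
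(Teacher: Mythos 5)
Your proof is correct and follows essentially the same route as the paper's: dismiss sequences $n_1,\dots,n_k$ other than $1,\dots,k$ as zero-probability (vacuous) cases, then use that $\{K \ge k\} = \{K \le k-1\}^c \in \sigma(X_1,\dots,X_{k-1})$ together with independence of $X_k$ from that $\sigma$-algebra to conclude the conditional law is $P$. No gaps to flag.
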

\begin{proof}
    By assumption $K$ is always finite. Pick any sequence $n_1, n_2, \dots , n_k \in \mathbb{N}$. If $n_j \ne j$ for some $j$, then $\mathbb{P}[K \ge k, N_1 = n_1, \dots , N_K = n_k] = 0$ and the requirement on conditional distribution of $X_{N_k}$ is vacuously satisfied. Hence from now on we assume that $n_j = j$ for all $j$.

    Consider the event $E = \{K \ge k \} = \{K \le k - 1\}^{c}$. Since $K$ is a stopping time, $E$ belongs to $\sigma(X_1, \dots, X_{k-1})$. Therefore $E$ is independent of $X_k$. Thus:
    \begin{align*}
        X_k | \{K \ge k, N_1 = n_1, \dots , N_k = n_k\} \sim X_k | E \sim X_k \sim P
    \end{align*}
    as desired.
\end{proof}
The second advantage of our revitalised definition of Causal Rejection Sampler is the fact, that it does not require an ordering of indices - much like Shannon entropy.
\par
Now, we define a computational  model for non-causal sampling, and show that \Cref{thm:runtime_lower_bound} still holds in this extended framework.
Notably, this non-causal computational model covers the setting for the Poisson functional representation (PFR) / A* coding \cite{li2018strong,flamich2022fast}.
It is an interesting direction for future work to investigate the ``gap in computational power'' between our following definition, and Harsha et al.'s general non-causal sampling framework \cite{harsha2007communication,liu2018rejection}.
\begin{definition}[A*-like non-causal sampler]
\label{def:astar_like_sampler}
Let $(X_i)_{i \in \Nats}$ be a sequence of i.i.d.\ random variables with $X_i \sim P$, and let $K$ be a stopping time adapted to the sequence.
Now, let $1 \leq N \leq K$ be a random variable, such that $X_{N} \sim Q$.
We call such a system an A*-like non-causal sampler, in analogy to the sampling rule used by A* coding.
\end{definition}
Intuitively, an A*-like non-causal sampler only needs to check the first $K$ samples of the sequence $(X_i)$, after which it can select one of the examined samples such that its distribution will be guaranteed to follow the target $Q$.
Now note, that $N$ may depend on the entire sequence $(X_i)_{i \in \Nats}$.
We are now ready to extend \Cref{thm:runtime_lower_bound}.
\begin{theorem}
Let $P$, $Q$, $(X_i)_{i \in \Nats}$, $K$ and $N$ be as in \Cref{def:astar_like_sampler}.
Then,
\begin{align}
\exp(\infD{Q}{P}) \leq \Exp[K].
\end{align}
\end{theorem}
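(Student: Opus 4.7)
The plan is to follow the same template as the proof of \Cref{thm:runtime_lower_bound}, but replace the direct identity $Q(A_i) = \sum_k \Prob[K = k, X_k \in A_i]$ by a union bound over the prefix $X_1, \dots, X_K$ that the sampler inspects. Concretely, I would start by invoking the definition of $\infD{Q}{P}$ to obtain a sequence of sets $A_1, A_2, \dots$ with $P(A_i) > 0$ and $Q(A_i)/P(A_i) \to \exptwo(\infD{Q}{P})$, fix some $i$, and then aim to show that $Q(A_i) \le P(A_i)\cdot\Exp[K]$; passing to the limit in $i$ gives the theorem.

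The key observation is that since $1 \le N \le K$, the event $\{X_N \in A_i\}$ is contained in $\bigcup_{k=1}^{\infty} \{K \ge k,\, X_k \in A_i\}$ (taking $k = N$ in the union). Hence
\begin{align*}
Q(A_i) \;=\; \Prob[X_N \in A_i]
&\;\le\; \Prob\!\left[\bigcup_{k=1}^{\infty} \{K \ge k,\, X_k \in A_i\}\right] \\
&\;\le\; \sum_{k=1}^{\infty} \Prob[K \ge k,\, X_k \in A_i].
\end{align*}
Now the stopping-time hypothesis on $K$ does the rest of the work exactly as in the causal case: $\{K \ge k\} = \{K \le k-1\}^c \in \sigma(X_1,\dots,X_{k-1})$, hence is independent of $X_k$, so each term equals $\Prob[K \ge k]\cdot P(A_i)$, and summing yields $Q(A_i) \le P(A_i)\cdot \Exp[K]$.

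I do not expect any genuine obstacle; the only subtlety is ensuring that one really is allowed to treat $N$ non-causally while still extracting a causal bound. The way this is done is by noticing that the proof of \Cref{thm:runtime_lower_bound} never actually requires $X_K$ to be the returned sample, only that the returned sample lies among $\{X_1,\dots,X_K\}$ for a stopping time $K$. Replacing the equality $\Prob[X_K \in A_i] = \sum_k \Prob[K = k, X_k \in A_i]$ with the union-bound inequality above is precisely the modification that weakens causality of the output to causality of the \emph{inspection horizon} $K$, which is all that is needed for the independence step to go through.
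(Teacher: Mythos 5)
Your proposal is correct and takes essentially the same route as the paper: the paper's proof decomposes $\Prob[X_N \in A_i]$ exactly over the events $\{N = n\}$, inserts $\{K \ge n\}$ (valid since $N \le K$) and then discards the event $\{N = n\}$, which is precisely the over-counting your union bound over $\bigcup_{k}\{K \ge k,\, X_k \in A_i\}$ performs. Both arguments then conclude identically via the stopping-time independence of $\{K \ge k\}$ from $X_k$ and the identity $\sum_{k \ge 1}\Prob[K \ge k] = \Exp[K]$, so the only (immaterial) difference is that the paper keeps the conditional factor $\Prob[N = n \mid K \ge n, X_n \in A_i] \le 1$ explicit while you drop it at the union-bound step.
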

\begin{proof}
The statement of the theorem holds vacuously when $\Prob[K < \infty] < 1$.
Hence, for the rest of the proof, assume that $\Prob[K < \infty] = 1$.
\par
By definition of R{\'e}nyi $\infty$-divergence, there exists a sequence of sets $A_1, A_2, \dots$ such that $P(A_i) > 0$ for all $i$ and $Q(A_i) / P(A_i) \to \exp\left(\infD{Q}{P} \right)$ as $i \to \infty$. 
Now, fix some $i$. 
We will show that $\mathbb{E}[K] \ge Q(A_i) / P(A_i)$, thus finishing the proof.
Thus,
\begin{align}
&Q(A_i) = \Prob[X_{N} \in A_i] = \sum_{n = 1}^\infty \Prob[N = n, X_{n} \in A_i]
\nonumber \\
&= \sum_{n = 1}^\infty \Prob[K \geq n, N = n, X_{n} \in A_i] \\
&\,\,\,\,\stackrel{\text{(a)}}{=}\sum_{n = 1}^\infty \Prob[K \geq n]P(A_i)\Prob[N = n \mid K\geq n, X_n \in A_i] \nonumber \\
&\,\,\,\,\leq P(A_i)\sum_{n = 1}^\infty \Prob[K \geq n] \nonumber\\
&\,\,\,\,= P(A_i) \Exp[K].\nonumber
\end{align}
Eq.\ (a) follows from ${\Prob[X_n \in A_i \mid K \geq n] = P(A_i)}$, since $X_n \perp K$ for $K \geq n$ as $K$ is a stopping time.
Dividing both sides of the inequality by $P(A_i)$ finishes the proof.
\end{proof}
\section{The Connection Between the Channel Simulation Divergence, Excess Functional Information and Layered Entropy}
\label{appendix:csd_layered_entropy_connection}
\noindent
\textbf{Connection with the excess functional information:}
For two correlated random variables $\rvx, \rvy \sim P_{\rvx, \rvy}$, 
Li and El Gamal define the excess functional information as \cite{li2018strong}:
\begin{align}
\EFI{\rvx}{\rvy} = \inf_{\substack{\rvz\,:\, \rvz \perp \rvx \\ \Ent[\rvy \mid \rvx,\, \rvz]\, =\, 0}} \!\!\! \MI{\rvx}{\rvz \mid \rvy}
\end{align}
Then, letting $\phi(x) = -x \logtwo x$, they show that for discrete $\rvy$ with posterior $p(y \mid \rvx)$ we have \cite[prop.\ 1]{li2018strong}:
\begin{align}
\EFI{\rvx}{\rvy} \geq \sum_y \int_0^1 \phi(P_\rvx(p(y \mid \rvx) \geq t)) \, dt - \MI{\rvx}{\rvy}
\label{eq:li_efi_bound}
\end{align}
By Bayes' rule, we have
\begin{align*}
p(y \mid \rvx = x) = p(y) \cdot \frac{dP_{\rvx \mid \rvy = y}}{dP_\rvx}(x)
\end{align*}
where $p(y)$ is the probability mass function of $P_\rvy$.
Now, letting ${w_y(h) = \Prob_{\rvx \sim P_\rvx}\left[\nicefrac{dP_{\rvx \mid \rvy = y}}{d P_{\rvx}}(\rvx) \geq h \right]}$, we get
\begin{align*}
\sum_y \int_0^1 \phi(P_\rvx(p(y &\mid \rvx) \geq t)) \, dt \\
&= \sum_y \int_0^1 \phi\left(w_y\left(\frac{t}{p(y)}\right)\right) \, dt \\
&= \sum_y p(y) \int_0^{\nicefrac{1}{p(y)}} \phi\left(w_y\left(h\right)\right) \, dh \\
&= \Exp_{\rvy \sim P_\rvy}\left[\CSD{P_{\rvx \mid \rvy}}{P_\rvx}\right].
\end{align*}
Hence, we can rewrite the bound in \cref{eq:li_efi_bound} as
\begin{align}
\EFI{\rvx}{\rvy} \geq \Exp_{\rvy \sim P_\rvy}\left[\CSD{P_{\rvx \mid \rvy}}{P_\rvx}\right] - \MI{\rvx}{\rvy}
\end{align}
\par
\textbf{Connection with the layered entropy:}
Recently, Hegazy and Li studied a subclass of one-shot channel simulation problems on $\Reals$, namely when $\rvx = \rvy + \rvz$ for $\rvy \perp \rvz$ with $\rvy \sim \Unif(0, t)$ for some $t > 0$ and $\rvz \sim P_\rvz$.
Assuming $\rvz$ has a density with respect to the Lebesgue measure $\lambda$ on $\Reals$ and setting $f = dP_\rvz/d\lambda$, they define its \textit{layered entropy} as
\begin{align}
h_L(Z) = \int_0^\infty \lambda(f \geq h) \logtwo \lambda(f \geq h) \, dh.
\end{align}
In fact, $h_L$ can be related to the channel simulation divergence via a limiting argument.
Namely, let $P_\rvz$ with density $f$ be fixed.
Let $P_{\rvz_n}$ with densities $f_n$ be a sequence of probability distributions, with $\supp f_n \subseteq (-n/2, n/2)$ and $f_n \to f$ pointwise as $n \to \infty$.
Then,
\begin{align}
h_L(Z) = \lim_{n \to \infty} \left(\logtwo n - \CSD*{P_{\rvz_n}}{\Unif\left(-\frac{n}{2}, \frac{n}{2}\right)}\right)
\label{eq:layered_entropy_limit}
\end{align}
To see this, first let us introduce the shorthand $U_n \sim \Unif(-n/2, n/2)$.
Then, observe that
\begin{align}
\frac{dP_{\rvz_n}}{d\lambda} = n\frac{dP_{\rvz_n}}{d P_{U_n}}.
\nonumber
\end{align}
Hence,
\begin{align}
\lambda(f_n \geq h) &= \lambda\left(\frac{dP_{\rvz_n}}{d P_{U_n}} \geq \frac{h}{n} \right) = n \cdot \Prob_{\rvu \sim U_n}\left[ \frac{dP_{\rvz_n}}{d P_{U_n}}(\rvu) \geq \frac{h}{n} \right].
\nonumber
\end{align}
Letting $w_n(h) = \Prob_{\rvu \sim U_n}\left[ \nicefrac{dP_{\rvz_n}}{d P_{U_n}}(\rvu) \geq h \right]$, we get
\begin{align}
h_L(Z_n) &= \int_0^\infty n\cdot w_n(n \cdot h) \logtwo (n \cdot w_n(n \cdot h)) \, dh
\nonumber \\
&= \logtwo n + \int_0^\infty w_n(\eta) \logtwo w_n(\eta) \, d\eta \nonumber\\
&= \logtwo n - \CSD{P_{\rvz_n}}{P_{U_n}}. \nonumber
\end{align}
Finally, taking the limit as $n \to \infty$, we obtain \cref{eq:layered_entropy_limit}.
In Theorems 1 and 3 of \cite{hegazy2022randomized}, Hegazy and Li show that the optimal rate $R^*_\infty(\rvz)$ to communicate $\rvz$ using dithered quantization, assuming a ``uniformly random source on $\Reals$'' is
\begin{align}
    R^*_\infty(\rvz) = -h_L(\rvz),
\end{align}
which can be thought of as an asymptotic analogue of our \Cref{thm:crs_index_entropy_lower_bound,thm:grs_achievability}.
\section{Convexity of Channel Simulation Divergence}
\label{appendix:csd_convexity}

We generalize the result that $\CSD{Q}{P}$ is convex with the following theorem.

\begin{theorem}
\label{thm:convexity}
    Let $\phi: [0, 1] \rightarrow \mathbb{R}$ be non-negative, concave with $\phi(0) = 0$. Let $(X, \mu)$ and $(Y, \nu)$ be finite measure spaces, with $\mu(X) = 1$. For each $y \in Y$ let $f_y(x)$ be a non-negative function and set $F(x) = \int_{Y} f_y(x) \,d\nu(y)\,$. Define:
    \begin{align*}
        w_y(h) = \mu(f_y(x) \ge h) \ \ \ \ \ \ \ W(h) = \mu(F(x) \ge h)
    \end{align*}
    Then we have:
    \begin{align}
        \label{ineq:concavity}
        \int_{Y} \int_{h = 0}^{\infty}{\phi(w_y(h)) \,dh} \,d\nu(y) \geq \int_{h = 0}^{\infty}{\phi(W(h)) \,dh}
    \end{align}
\end{theorem}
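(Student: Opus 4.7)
The plan is to reduce inequality (\ref{ineq:concavity}) to a classical majorization inequality between $F$ and the family $\{f_y\}_{y\in Y}$ via their decreasing rearrangements. For a non-negative measurable $f$ on the probability space $(X,\mu)$, write $f^*\!:[0,1]\to[0,\infty]$ for its decreasing rearrangement, $f^*(t)=\inf\{h\ge 0 : \mu(f>h)\le t\}$. Using $\phi(0)=0$ together with absolute continuity of the concave function $\phi$ on $[0,1]$, I would first prove the layer-cake identity
\begin{align*}
\int_0^\infty \phi\bigl(\mu(f\ge h)\bigr)\,dh = \int_0^1 \phi'(t)\,f^*(t)\,dt
\end{align*}
by writing $\phi(v)=\int_0^v \phi'(s)\,ds$ and exchanging the order of integration; the inner integral $\int_0^\infty \Ind[t\le \mu(f\ge h)]\,dh$ evaluates to $f^*(t)$ because $\{h:\mu(f\ge h)\ge t\}$ is the interval $[0,f^*(t)]$. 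Applying this identity to $F$ and to each $f_y$ recasts both sides of (\ref{ineq:concavity}) as integrals on $[0,1]$ against the non-increasing function $\phi'$.

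Next, I would compare $F^*$ with $\hat f(t):=\int_Y f_y^*(t)\,d\nu(y)$ via a Hardy--Littlewood-type majorization. For any measurable $E\subseteq X$ with $\mu(E)=t$, Fubini and the Hardy--Littlewood inequality applied to each $f_y$ yield
\begin{align*}
\int_E F\,d\mu &= \int_Y \!\int_E f_y\,d\mu\,d\nu(y) \\
&\le \int_Y \!\int_0^t f_y^*(u)\,du\,d\nu(y) = \int_0^t \hat f(u)\,du.
\end{align*}
Taking the supremum over $E$ and using $\int_0^t F^*(u)\,du = \sup_{\mu(E)\le t}\int_E F\,d\mu$ gives the majorization $\int_0^t F^* \le \int_0^t \hat f$ for every $t\in[0,1]$, with equality at $t=1$ since both sides equal $\int F\,d\mu$.

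To conclude, set $G(t)=\int_0^t(\hat f(u)-F^*(u))\,du$, so that $G(0)=G(1)=0$ and $G\ge 0$ on $[0,1]$. Assuming $\phi\in C^2$ for the moment, integration by parts in the difference of the two layer-cake integrals yields
\begin{align*}
\int_0^1 \phi'(t)\bigl(\hat f(t)-F^*(t)\bigr)\,dt
&= \bigl[\phi'(t)G(t)\bigr]_0^1 - \int_0^1 \phi''(t)\,G(t)\,dt \\
&= -\int_0^1 \phi''(t)\,G(t)\,dt \;\ge\; 0,
\end{align*}
because $\phi''\le 0$ by concavity and $G\ge 0$; combined with the layer-cake identity above, this is exactly (\ref{ineq:concavity}). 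The main technical obstacle I anticipate is removing the $C^2$ assumption on $\phi$ and accommodating a blow-up of $\phi'$ near $0$, as happens for the canonical choice $\phi(x)=-x\ln x$. I plan to handle this by approximating $\phi$ from below by smooth concave $\phi_\varepsilon$ with $\phi_\varepsilon(0)=0$ and invoking monotone convergence on both sides of (\ref{ineq:concavity}), or, equivalently, by treating $\phi'$ as a non-increasing function of bounded variation and rewriting the integration by parts as a Stieltjes integral against the non-negative measure $-d\phi'$ on $(0,1]$, which remains $\sigma$-finite there even when $\phi'(0^+)=\infty$.
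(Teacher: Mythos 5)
Your proof is correct, but it follows a genuinely different route from the paper's. The paper approximates $\phi$ from below by piecewise-linear concave interpolants $\phi_n$ (monotone convergence on both sides), decomposes each $\phi_n$ into positive combinations of maps $x \mapsto \alpha x + \beta\min\{x, c\}$ with $\beta \ge 0$, and then invokes the superlevel-set computation already carried out in the main text for the functionals $\int_0^\infty \min\{w(h), c\}\,dh$; your argument instead passes through decreasing rearrangements, converting $\int_0^\infty \phi(w(h))\,dh$ into $\int_0^1 \phi'(t)\,f^*(t)\,dt$ via the layer-cake identity, establishing the continuous majorization $\int_0^t F^* \le \int_0^t \int_Y f_y^*\,d\nu$ by Hardy--Littlewood, and concluding by Abel/Stieltjes integration by parts against the non-increasing $\phi'$. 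Your route is the more classical Hardy--Littlewood--P\'olya argument and makes the equality condition (shared superlevel sets, i.e.\ $F^* = \int_Y f_y^*\,d\nu$) transparent, while the paper's route recycles the set-theoretic estimate it already needed for convexity of $D_{CS}$ and avoids rearrangement machinery. Two technical points you should fold into the write-up, beyond the smoothness of $\phi$ that you already flag: the identity $\int_0^t F^*(u)\,du = \sup_{\mu(E)\le t}\int_E F\,d\mu$ requires $\mu$ to be non-atomic, so add the same ``without loss of generality'' reduction the paper uses in the main text (replace $(X,\mu)$ by its product with $([0,1],\lambda)$, which changes no width function); and when $\int_X F\,d\mu = \infty$ your $G(t)$ is of the form $\infty - \infty$, so first truncate to $f_y \wedge n$ (which only removes the nonnegative contribution of $\phi(w_y(h))$ for $h > n$ on the left-hand side, and recovers the right-hand side as $n \to \infty$ by Fatou) before running the majorization step.
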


\begin{proof}
    Consider a sequence of piece-wise linear functions $\phi_n$ containing $2^n + 1$ nodes:
    \begin{align*}
        \phi_n\left(\frac{l}{2^n}\right) = \phi\left(\frac{l}{2^n}\right) \ \ \ \ \ l = 0, 1, ... , 2^n
    \end{align*}
    Since $\phi$ is concave, it is continuous on $(0, 1)$. We conclude that $\phi_n$ increases point-wise to the limit $\phi$ on whole $[0, 1]$. As $\phi_n$ are non-negative, by Monotone Convergence Theorem:
    \begin{align*}
        \int_{h = 0}^{\infty}{\phi(W(h)) \,dh} = \lim_{n \rightarrow \infty} \int_{h = 0}^{\infty}{\phi_n(W(h)) \,dh}
    \end{align*}
    with similar limits holding for $\phi(w_y(h))$. Therefore it suffices to show:
    \begin{align*}
        \int_{Y} \int_{h = 0}^{\infty}{\phi_n(w_y(h)) \,dh} \,d\nu(y) \geq \int_{h = 0}^{\infty}{\phi_n(W(h)) \,dh}
    \end{align*}
    Now we notice that each $\phi_n(x)$ can be written as a positive linear combination of non-negative functions of the form $\alpha x + \beta \min(x, c)$, (where $\beta \ge 0$) hence we are in the same situation as where we left off in the main text.
\end{proof}
We note:
\begin{itemize}
    \item[1.] It can be shown that any $\phi: [0, 1] \rightarrow \mathbb{R}$ with $\phi(0) = 0$ that satisfies \Cref{thm:convexity} has to be concave.
    \item[2.] The equality in \Cref{thm:convexity} occurs when functions $f_y(x)$ share the same super-level sets.
\end{itemize}
\vspace{0.2cm}
\begin{definition}[Generalization of $\CSD{Q}{P}$]
Fix a concave, continuous function $\phi: [0, 1] \to [0, \infty)$ such that $\phi(0) = \phi(1) = 0$.
For distributions $Q$ and $P$ with $Q \ll P$ and $w(h) = P\left(\nicefrac{dQ}{dP} \ge h\right)$, we define
\begin{align*}
\phiD{Q}{P} = \int_{0}^{\infty}{\phi(w(h)) \,dh}.
\end{align*}
\end{definition}

Since $\phi(0) = \phi(1) = 0$ and $\phi$ is concave, then $\phi \ge 0$.
Thus $\phiD{Q}{P}$ is always non-negative.
Furthermore, assuming that $\phi$ is non-zero, we find $\phi > 0$ on $(0, 1)$, thus $\phiD{Q}{P} = 0$ if and only if $w(h) \in \{0, 1\}$ almost everywhere.
This is equivalent to $P = Q$.

We further note, by \Cref{thm:convexity}, that $D^\phi$ is convex in $Q$ for fixed $P$. Note that $D^\phi$ may not be convex in both $P$ and $Q$ in general.
\par
Furthermore, if we assume that $\phi$ is twice differentiable on $(0, 1)$ and $\abs{\phi'(1)} < \infty$, 
then we can obtain the following integral representation of $\phiD{Q}{P}$ akin to \cref{eq:channel_simulation_divergence_integral_rep}:
\begin{align}
\phiD{Q}{P} = \phi'(1) - \int_0^1 \phi''(y) \int_0^\infty \min\{w(h), y\}\,dh \, dy.
\end{align}
We note, that this integral representation is similar in flavor to the integral representation of Sason and Verd{\'u}'s integral representation for $f$-divergences (Section VII in \cite{sason2016f}), with the quantity
\begin{align}
\mu_y(w) = \int_0^\infty \min\{w(h), y\} \, dh
\end{align}
playing an analogous role to the hockey-stick divergence.
To see this, let ${\mu_y' = \int_0^\infty \Ind[w(h) \geq y] \, dh}$ denote the derivative $\mu_y$ with respect to $y$ and observe that
\begin{align*}
\phiD{Q}{P} &= \int_0^\infty \phi(w(h))\,dh \\
&= -\int_0^1 \phi(y) \, d\mu_y'(w) \\
&= \int_0^1 \phi'(y)\mu'_y(w) \, dy \\
&= \phi'(1) - \int_0^1 \phi''(y) \mu_y(w) \, dy,
\end{align*}
where the last two equalities follow from integration by parts.
\par
For general $\phi$ twice differentiable we also have the following representation:
\begin{align*}
    \phi(y) = \int_{0}^{1} -\phi''(t) \min\{(1-y)t, y(1-t)\} \,dt 
\end{align*}
\section{Upper Bounds on (Alternative) Channel Simulation Divergence}
\label{appendix:alternative_csd}
\par
We define the Alternative Channel Simulation Divergence. Given $P, Q$ with $w(h) = P\left(\nicefrac{dQ}{dP} \ge h\right)$ we set:
\begin{align*}
    \ACSD{Q}{P} = \int_{0}^{\infty}{\Ent_b[w(h)] \,dh} 
\end{align*}
Thus $\KLD{Q}{P} \le \CSD{Q}{P} \le \ACSD{Q}{P}$.
Consider the following problem: For fixed upper bound on $\KLD{Q}{P}$, what is the maximal possible value that $\CSD{Q}{P}$ (alternatively $\ACSD{Q}{P}$) can take?

The solutions can be found with variational methods.
\begin{theorem}
\label{thm:opt_DCS}
Fix some number $\alpha \in (0, 1)$, set $p = \frac{1}{1 - \alpha}$ and suppose that for some $P_\alpha, Q_\alpha$:
\begin{align*}
w_\alpha(h) = \begin{cases}
1 & h \le \alpha
\\
(h / \alpha)^{-p} & h > \alpha
\end{cases}
\end{align*}
Then:
\begin{align*}
\KLD{Q_{\alpha}}{P_{\alpha}} &= \left(\frac{1}{\alpha} - 1 + \ln(\alpha)\right) / \ln(2)
\\
\CSD{Q_{\alpha}}{P_{\alpha}} &= \frac{1 - \alpha}{\alpha} / \ln(2)
\end{align*}
Furthermore, for any distributions $P, Q$:
\begin{align*}
    \KLD{Q}{P} \le \KLD{Q_\alpha}{P_\alpha} \ \Rightarrow \ \CSD{Q}{P} \le \CSD{Q_\alpha}{P_\alpha}
\end{align*}
\end{theorem}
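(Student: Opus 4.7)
The first two equalities are direct computations. For $\CSD{Q_\alpha}{P_\alpha}$, the plan is to plug $w_\alpha$ into \cref{eq:channel_simulation_divergence_def}: the integrand vanishes on $[0,\alpha]$, and on $[\alpha,\infty)$ the substitution $u = h/\alpha$ followed by one integration by parts reduces the integral to $p\alpha/((p-1)^2 \ln 2)$; since $p - 1 = \alpha/(1 - \alpha)$ this collapses to $(1-\alpha)/(\alpha \ln 2)$. For $\KLD{Q_\alpha}{P_\alpha}$, I would apply the width-function identity $\KLD{Q}{P} = \log_2 e + \int_0^\infty w(h) \log_2 h\,dh$ from \cref{eq:kl_width_identity}, split the integral at $\alpha$, use the same substitution, and simplify using $\alpha/(p-1) = 1 - \alpha$.

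For the main implication I plan to establish the stronger linear bound
\begin{align*}
\CSD{Q}{P} - \CSD{Q_\alpha}{P_\alpha} \le p\bigl(\KLD{Q}{P} - \KLD{Q_\alpha}{P_\alpha}\bigr),
\end{align*}
from which the implication is immediate since $p > 0$. The motivation is Lagrangian duality: $w_\alpha$ is the maximizer of $\CSD{\cdot}{P} - p \KLD{\cdot}{P}$ over admissible width functions, obtained by taking a power law and truncating it at the active constraint $w \le 1$, with $p$ the Lagrange multiplier dual to the KL-divergence constraint.

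The key step is the pointwise tangent inequality for the concave function $\phi(y) = -y \ln y$ at $y^\ast = w_\alpha(h)$ against $y = w(h)$,
\begin{align*}
\phi(w) - \phi(w_\alpha) \le -(1 + \ln w_\alpha)(w - w_\alpha).
\end{align*}
Integrating in $h$, the contribution from the ``$1$'' term vanishes because $\int w\,dh = \int w_\alpha\,dh = 1$, so
\begin{align*}
\ln 2 \cdot \bigl(\CSD{Q}{P} - \CSD{Q_\alpha}{P_\alpha}\bigr) \le -\int_0^\infty (w - w_\alpha) \ln w_\alpha\,dh.
\end{align*}
Since $\ln w_\alpha \equiv 0$ on $[0,\alpha]$ and $\ln w_\alpha(h) = -p\ln(h/\alpha)$ on $[\alpha,\infty)$, the right-hand side equals $p \int_\alpha^\infty (w - w_\alpha)(\ln h - \ln\alpha)\,dh$. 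Using $\int_\alpha^\infty (w - w_\alpha)\,dh = \int_0^\alpha (w_\alpha - w)\,dh$ and applying \cref{eq:kl_width_identity} to both $w$ and $w_\alpha$, this reduces the target inequality to showing that the remainder
\begin{align*}
R = \int_0^\alpha (w_\alpha - w)(\ln h - \ln\alpha)\,dh
\end{align*}
is non-positive, which holds because on $[0,\alpha]$ one has $w(h) \le 1 = w_\alpha(h)$ and $\ln h \le \ln \alpha$.

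The main obstacle is the careful handling of the region $[0,\alpha]$: there $w_\alpha$ saturates the pointwise constraint $w \le 1$, and a naive tangent-line argument does not directly produce the $\KLD$-linear bound. The algebraic reshuffle above is what converts the error terms into the elementary non-positive remainder $R$; identifying $p = 1/(1-\alpha)$ as the correct Lagrange multiplier is what makes everything collapse.
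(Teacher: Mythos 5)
Your proposal is correct, and it actually supplies more than the paper does: the paper states this theorem with no written proof, only the remark that ``the solutions can be found with variational methods.'' Your argument is a rigorous instantiation of exactly that idea. The two computations check out: $\int_0^\infty w_\alpha = 1$, the substitution $u = h/\alpha$ plus one integration by parts gives $\CSD{Q_\alpha}{P_\alpha} = \frac{\alpha p}{(p-1)^2\ln 2} = \frac{1-\alpha}{\alpha \ln 2}$, and \cref{eq:kl_width_identity} gives $\KLD{Q_\alpha}{P_\alpha} = (\nicefrac{1}{\alpha} - 1 + \ln\alpha)/\ln 2$. For the implication, your tangent-line step is sound: concavity of $\phi(y) = -y\ln y$ gives $\phi(w(h)) - \phi(w_\alpha(h)) \le -(1+\ln w_\alpha(h))(w(h) - w_\alpha(h))$ pointwise (including at $w(h)=0$, since $\phi(0)=0 \le w_\alpha(h)$, and $w_\alpha > 0$ everywhere so the tangent exists); integrating kills the ``$1$'' term because $\int w = \int w_\alpha = 1$, the same normalization lets you subtract the constant $\ln\alpha$ for free, and \cref{eq:kl_width_identity} applied to both width functions turns the remaining term into $p\ln 2\,(\KLD{Q}{P} - \KLD{Q_\alpha}{P_\alpha}) + pR$ with $R = \int_0^\alpha (w_\alpha - w)(\ln h - \ln\alpha)\,dh \le 0$, exactly as you say. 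This proves the stronger Lagrangian bound $\CSD{Q}{P} - \CSD{Q_\alpha}{P_\alpha} \le p\,(\KLD{Q}{P} - \KLD{Q_\alpha}{P_\alpha})$, from which the stated monotone implication is immediate. The only housekeeping worth adding in a write-up: reduce to the case $\KLD{Q}{P} < \infty$ (otherwise the hypothesis is vacuous since $\KLD{Q_\alpha}{P_\alpha}$ is finite), which by part (IV) of \Cref{lemma:csd_properties} also forces $\CSD{Q}{P} < \infty$, so all the integrals you rearrange are finite and the manipulations are legitimate.
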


With similar bounds holding for alternative channel simulation divergence:
\begin{theorem}
\label{thm:opt_ADCS}
Fix some number $\alpha > 1$, set $\beta = \frac{\pi}{\alpha} \sin\left(\frac{\pi}{\alpha}\right)^{-1}$ and suppose that for some $P_\alpha, Q_\alpha$:
\begin{align*}
w_\alpha(h) = \frac{1}{1 + (\beta h)^{\alpha}}
\end{align*}
Then:
\begin{align*}
\KLD{Q_{\alpha}}{P_{\alpha}} &= -\left(\ln(\beta) - 1 + \beta \cos\left(\frac{\pi}{\alpha}\right)\right) / \ln(2)
\\
\ACSD{Q_{\alpha}}{P_{\alpha}} &= \left(\alpha - \pi \cot\left(\frac{\pi}{\alpha}\right) \right) / \ln(2)
\end{align*}
Furthermore, for any distributions $P, Q$:
\begin{align*}
    \KLD{Q}{P} \le \KLD{Q_\alpha}{P_\alpha} \Rightarrow \ACSD{Q}{P} \le \ACSD{Q_\alpha}{P_\alpha}
\end{align*}
\end{theorem}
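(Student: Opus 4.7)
\textbf{Proof strategy for Theorem~\ref{thm:opt_ADCS}.} The plan is to split the argument into two phases: first evaluate $\KLD{Q_\alpha}{P_\alpha}$ and $\ACSD{Q_\alpha}{P_\alpha}$ in closed form by reducing each to standard Mellin-type integrals, and then establish the extremality claim via a tangent-line inequality applied to $\Ent_b$.

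\emph{Phase 1 (closed-form values).} Using equation~\eqref{eq:kl_width_identity} I write $\KLD{Q_\alpha}{P_\alpha}\cdot \ln 2 = 1 + \int_0^\infty w_\alpha(h) \ln h \, dh$ and substitute $u = \beta h$ to reduce the right-hand integral to $\frac{1}{\beta}\int_0^\infty \frac{\ln u - \ln \beta}{1+u^\alpha}\, du$. The two ingredients I need are the classical Beta-function identity $\int_0^\infty u^{s-1}/(1+u^\alpha)\, du = \pi/(\alpha \sin(\pi s/\alpha))$ for $0 < s < \alpha$ (which at $s = 1$ recovers $\beta$) and its derivative at $s = 1$, namely $\int_0^\infty \ln(u)/(1+u^\alpha)\, du = -\beta^2 \cos(\pi/\alpha)$; substituting yields the stated KL formula. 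For $\ACSD{Q_\alpha}{P_\alpha}$, I exploit $(1 - w_\alpha)/w_\alpha = (\beta h)^\alpha$ to rewrite $\Ent_b[w_\alpha(h)]\cdot \ln 2 = \ln(1 + (\beta h)^\alpha) - \alpha(1 - w_\alpha(h))\ln(\beta h)$. After the same substitution, the right-hand side splits into $J := \int_0^\infty \ln(1 + u^{-\alpha})\, du$ and a logarithmic integral I already know; one integration by parts converts $J$ into $\int_0^\infty u^{-1/\alpha}/(1+u)\, du = \pi/\sin(\pi/\alpha) = \alpha\beta$. Recombining gives $\ACSD{Q_\alpha}{P_\alpha}\cdot \ln 2 = \alpha - \pi\cot(\pi/\alpha)$.

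\emph{Phase 2 (extremality).} For any valid width function $w$, concavity of $\Ent_b$ on $[0,1]$ yields the tangent-line bound
\begin{align*}
\int_0^\infty \!\! \Ent_b[w(h)]\, dh - \int_0^\infty \!\! \Ent_b[w_\alpha(h)]\, dh \le \int_0^\infty \!\! \Ent_b'[w_\alpha(h)]\, (w(h) - w_\alpha(h))\, dh.
\end{align*}
A direct calculation gives $\Ent_b'[p] = \logtwo((1-p)/p)$, so at $p = w_\alpha(h)$ it equals $\alpha \logtwo(\beta h)$. The right-hand side therefore factors as $\alpha \logtwo\beta \cdot \int(w - w_\alpha)\, dh + \alpha \int(w - w_\alpha)\logtwo h\, dh$. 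Every valid width integrates to $1$ (via the Darth Vader identity used in the proof of Lemma~\ref{lemma:csd_properties}), so the first summand vanishes, and by equation~\eqref{eq:kl_width_identity} the second summand equals $\alpha[\KLD{Q}{P} - \KLD{Q_\alpha}{P_\alpha}]$, which is $\le 0$ under the hypothesis. This yields the desired implication.

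\emph{Main obstacle.} The tangent-line inequality is crisp and the Phase 1 evaluations are entirely routine once the three Mellin integrals are on the table, but each of $\int w \ln h\, dh$ and $\int \Ent_b[w]\, dh$ splits into pieces that are improper near both $0$ and $\infty$. The delicate bookkeeping is to justify the tangent decomposition as an equation of finite quantities for any admissible $w$; I expect to handle this by restricting to $w$ with $\KLD{Q}{P} < \infty$ (the only case where the conclusion is nontrivial) and passing to a truncated-and-limit argument with an integrable majorant controlled by $w_\alpha$.
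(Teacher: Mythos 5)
Your proof is correct, and it supplies exactly the argument the paper leaves implicit: the appendix only remarks that these extremal characterizations ``can be found with variational methods,'' and your tangent-line bound $\Ent_b[w(h)] \le \Ent_b[w_\alpha(h)] + \Ent_b'[w_\alpha(h)]\,(w(h)-w_\alpha(h))$ with $\Ent_b'[w_\alpha(h)] = \alpha \logtwo(\beta h)$ is precisely the rigorous first-order form of that variational argument, reducing the implication to $\int_0^\infty (w-w_\alpha)\,dh = 0$ together with the identity in \cref{eq:kl_width_identity}. Your closed-form evaluations also check out (the rearrangement $\Ent_b[w_\alpha(h)]\cdot\ln 2 = \ln\bigl(1+(\beta h)^{-\alpha}\bigr) + \alpha\, w_\alpha(h)\ln(\beta h)$ makes both pieces absolutely convergent and yields $\alpha - \pi\cot(\pi/\alpha)$), and the integrability bookkeeping you flag is not a genuine obstacle, since the hypothesis $\KLD{Q}{P} \le \KLD{Q_\alpha}{P_\alpha} < \infty$ already forces $\int_0^\infty w(h)\,\abs{\logtwo h}\,dh < \infty$, so the tangent decomposition is an identity between finite integrals.
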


We note that all decreasing functions $w: [0, \infty) \to [0, 1]$ satisfying $\int w(h)\,dh = 1$ have a corresponding pair of distributions $P, Q$, thus the above bounds are achievable.

Now fix some $P, Q$ and set $\kappa = \KLD{Q}{P}$. Using \Cref{thm:opt_DCS} one can show:
\begin{align*}
    \CSD{Q}{P} \le \kappa + \logtwo(\kappa + 1) + \logtwo(\ln(4))
\end{align*}
Which in combination with \Cref{thm:grs_achievability} gives:
\begin{align*}
    \Ent[K] &\le \kappa + \logtwo(\kappa + 1) + \logtwo((e + 1) \ln(4))
    \\
    &< \kappa + \logtwo(\kappa + 1) + 2.366
\end{align*}
where $K$ is the index returned by GRS. 

Lastly, the equality $\CSD{Q}{P} = \KLD{Q}{P}$ holds when $w(h) = \frac{1}{c} \cdot 1[h \le c]$ for some $c \ge 1$.
\section{Extended Causal Rejection Samplers}
\label{appendix:causal_rejection_samplers}

The key results of our paper can be also proven for ECRS.

\begin{theorem}
\label{thm:entropy_lower_bound_for_ECRS}
    Let $N_K$ be the index returned by a Causal Rejection Sampler. Then:
    \begin{align*}
        \ACSD{Q}{P} \leq \Ent[N_K]
    \end{align*}
\end{theorem}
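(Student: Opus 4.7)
The plan is to adapt the proof of \Cref{thm:crs_index_entropy_lower_bound} to the ECRS setting via an analogous telescoping lemma. Let $Q_k$ denote the conditional distribution of $X_{N_K}$ given $K \ge k$, let $q_k = \Prob[K = k \mid K \ge k]$, $p_k = 1 - q_k$, and let $\widetilde Q_k$ be the conditional distribution of $X_{N_K}$ given $K = k$. I would target the single-step inequality
\begin{align*}
&\Prob[K \ge k]\bigl(\Ent[N_K \mid K \ge k] - \ACSD{Q_k}{P}\bigr) \\
&\quad \ge \Prob[K \ge k+1]\bigl(\Ent[N_K \mid K \ge k+1] - \ACSD{Q_{k+1}}{P}\bigr),
\end{align*}
and then finish by iterating and invoking the terminated stopping time argument from Appendix~\ref{appendix:causal_rejection_samplers} verbatim.

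The single-step inequality would be attacked with the same three ingredients as in \Cref{lemma:index_entropy_csd_difference}: the mixture identity $Q_k = q_k \widetilde Q_k + p_k Q_{k+1}$, convexity of the divergence, and a one-step bound on $\ACSD{\widetilde Q_k}{P}$. Convexity of $\ACSD$ in its first argument follows immediately from \Cref{thm:convexity} applied with $\phi = \Ent_b$, which is non-negative, concave, and vanishes at $0$ and $1$. For the one-step bound, I would exploit the mixture structure that ECRS offers: write $d\widetilde Q_k/dP = (1/q_k)\sum_n \mu_n \beta_{k,n}$, where $\mu_n = \Prob[N_k = n \mid K \ge k]$ and $\beta_{k,n}$ is the acceptance rule at step $k$ when $N_k = n$, and apply \Cref{thm:convexity} together with Jensen's inequality summand-by-summand. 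This yields the refined estimate $q_k \ACSD{\widetilde Q_k}{P} \le \sum_n \mu_n \Ent_b[q_{k,n}]$ with $q_{k,n} = \Exp_P[\beta_{k,n}]$, which is strictly tighter than the naive bound $\Ent_b[q_k]$ and is essential for the argument to close.

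The new ingredient is the treatment of $\Ent[N_K \mid K \ge k]$ itself. Conditioning on the pair $(N_k, \Ind[K=k])$ and invoking ECRS's distinct-indices property in the form ``$\mu_n > 0 \Rightarrow \Prob[N_K = n \mid N_k = n,\, K \ge k+1] = 0$'' splits $\{N_K = n\}$ cleanly into an ``accept at step $k$ with $N_k = n$'' event of probability $\mu_n q_{k,n}$ and a ``pick some $n' \ne n$ at step $k$ and emit $n$ later'' event, producing exactly the $\Ent_b[q_{k,n}]$-terms needed to cancel the bound on $q_k \ACSD{\widetilde Q_k}{P}$ obtained above. The main obstacle is then the correction $\Ent[\Ind[K=k] \mid N_K, K \ge k]$, which vanishes in the CRS setting (there $N_K = K$) but is genuinely non-zero for ECRS; it must be exchanged against the extra $-(1-w)\logtwo(1-w)$ contribution that separates $\ACSD$ from $\CSD$, and the distinct-indices constraint is precisely the structural fact that supplies the entropy budget for this exchange.
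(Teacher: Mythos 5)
Your overall architecture is the same as the paper's (a single-step lemma analogous to \Cref{lemma:index_entropy_csd_difference}, iteration for uniformly bounded $K$, then the terminated-sampler limiting argument), and two of your ingredients are sound: convexity of $\ACSD{\cdot}{P}$ via \Cref{thm:convexity} with $\phi = \Ent_b$, and the Jensen-type estimate $q_k \ACSD{\widetilde Q_k}{P} \le \sum_n \mu_n \Ent_b[q_{k,n}]$ (decompose $\widetilde Q_k$ as a mixture over $n$ and use $\nicefrac{d\widetilde Q_{k,n}}{dP} \le q_{k,n}^{-1}$). The genuine gap is on the entropy side. Conditioning on $N_k$ and using distinctness (given $N_k = n$ and $K \ge k$, the event $\{N_K = n\}$ coincides with $\{K = k\}$) gives, via the chain rule,
\begin{align*}
\Ent[N_K \mid K \ge k] \;\ge\; \sum_n \mu_n \Ent_b[q_{k,n}] \,+\, \sum_n \mu_n (1 - q_{k,n})\, \Ent[N_K \mid N_k = n,\, K \ge k+1],
\end{align*}
and the last sum equals $p_k\, \Ent[N_K \mid N_k, K \ge k+1] \le p_k\, \Ent[N_K \mid K \ge k+1]$. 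That inequality runs the wrong way for your stated \emph{unconditional} single-step inequality, which needs $p_k \Ent[N_K \mid K \ge k+1]$ as a lower-bound term (and, symmetrically, convexity makes the history-conditioned divergence terms at least $p_k \ACSD{Q_{k+1}}{P}$, again unfavorable). Your closing paragraph, which proposes to trade the correction $\Ent[\Ind[K=k] \mid N_K, K \ge k]$ against the surplus of $\ACSD{Q}{P}$ over $\CSD{Q}{P}$, is an assertion rather than an argument: no mechanism is given, and I do not see one at the unconditional level. The natural repair is to put the conditioning into the induction hypothesis itself: prove by backward induction, for uniformly bounded $K$, that $\Ent[N_K \mid \mathcal{F}_{k-1}, K \ge k]$ dominates the expected $\ACSD{\cdot}{P}$ of the law of $X_{N_K}$ conditioned on the past $\mathcal{F}_{k-1} = \sigma(N_1, X_{N_1}, \dots, N_{k-1}, X_{N_{k-1}})$; in that formulation both convexity and ``conditioning reduces entropy'' point in the favorable direction, and at $k = 1$ one gets $\ACSD{Q}{P} \le \Ent[N_K]$.

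A secondary point: the limiting argument cannot be invoked \emph{verbatim}. The bound $\abs{\phi(w(h)) - \phi(w_L(h))} \le A \abs{w(h) - w_L(h)}$ in Appendix~\ref{appendix:causal_rejection_samplers} uses that $\phi(x) = -x\logtwo x$ is Lipschitz away from $0$; here $\phi$ must be replaced by $\Ent_b$, whose slope is also unbounded at $1$, and $w, w_L$ are equal or close to $1$ for small $h$, so no uniform constant $A$ exists. This is fixable, e.g.\ via $\abs{\Ent_b[a] - \Ent_b[b]} \le \Ent_b[\abs{a-b}]$ for $\abs{a-b} \le \nicefrac{1}{2}$ together with Jensen's inequality on the bounded interval $[0, m]$, but it does require a modification of the paper's estimate rather than a word-for-word reuse.
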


\begin{proof}
    One can prove a lemma similar to \Cref{lemma:index_entropy_csd_difference}, but for ECRS. Then, one can show that the thesis of \Cref{thm:entropy_lower_bound_for_ECRS} holds for any ECRS with uniformly bounded $K$. The idea behind the remaining part of the proof is to take "bounded" or "terminated" approximations of sequence $(N_1, \dots, N_K)$.

    Let $0$ be a symbol not belonging to $I$, and let $X_0 \sim P$ be an independent variable. For each natural number $L$ define a new ECRS in the following way: if $K \le L$, return sequence $(N_1, \dots, N_K)$, otherwise return sequence $(N_1, \dots, N_L, 0)$. One can check that this construction defines an ECRS.
    
    Let $N^L$ be the last index of the returned sequence. As in proof of \Cref{thm:crs_index_entropy_lower_bound}, one finds:
    \begin{align}
        \label{CRS_epic}
        \Ent[N^L] \rightarrow \Ent[N_K]
    \end{align}
    Let $Q_L$ be the distribution of $X_{N^L}$. Following similar reasoning to that in proof of \Cref{thm:entropy_lower_bound_for_ECRS}, it suffices to show:
    \begin{align}
        \label{CRS_CSD}
        \limsup_{L \rightarrow \infty} \CSD{Q_L}{P} \ge \CSD{Q}{P}
    \end{align}
    That will be our goal.

    To prove limit (\ref{CRS_CSD}) begin with the following observation:
    \begin{align*}
        \int \left|\frac{dQ_L}{dP} - \frac{dQ}{dP} \right| \,dP \to 0 \ \ \text{as} \ \ L \to \infty
    \end{align*}
    which is true since $K$ is always finite. Set $w(h) = P\left(\nicefrac{dQ}{dP} \ge h \right)$ and $w_L(h) = P\left(\nicefrac{dQ_L}{dP} \ge h \right)$. One can prove:
    \begin{align*}
        \int_{0}^{\infty}{\left|w_L(h) - w(h) \right| \,dh} \le \int \left|\frac{dQ^L}{dP} - \frac{dQ}{dP} \right| \,dP
    \end{align*}
    And thus we find:
    \begin{align}
        \label{CRS_limit}
        \int_{0}^{\infty}{\left|w_L(h) - w(h) \right| \,dh} \rightarrow 0 \ \ \text{as} \ \ L \rightarrow \infty
    \end{align}
    Next, let $M = \sup\{h: w(h) > 0 \}$ and let $\phi(x) = -x \logtwo(x)$. We note the following limit:
    \begin{align*}
        \int_{0}^{m}{\phi(w(h)) \,dh} \rightarrow \CSD{Q}{P} \ \ \text{as} \ \ m \rightarrow M
    \end{align*}
    Fix some $m < M$. Since $w(m) > 0$, it follows that there exists a constant $A$ satisfying $|\phi(w(h)) - \phi(w_L(h))| \le A |w(h) - w_L(h)|$ for all $0 \le h \le m$. Then:
    \begin{align*}
        \CSD{Q_L}{P} &= \int_{0}^{\infty}{\phi(w_L(h)) \,dh}
        \\
        &\ge \int_{0}^{m}{\phi(w(h)) - |\phi(w_L(h)) - \phi(w(h))| \,dh}
        \\
        &\ge \int_{0}^{m}{\phi(w(h)) \,dh} - A \int_{0}^{m}{|w_L(h) - w(h)| \,dh}
        \\
        &\rightarrow \int_{0}^{m}{\phi(w(h)) \,dh}
    \end{align*}
    By \cref{CRS_limit}. Taking $m \rightarrow M$ we get \cref{CRS_CSD}.
\end{proof}

It should be noted that the equality case $\ACSD{Q}{P} = \Ent[K]$ occurs "often" for Greedy Rejection Sampler. Indeed, if $w(h) = const$ holds Lebesgue a.e. in each interval $[L_k, L_{k + 1}]$, then we get equality in \Cref{thm:entropy_lower_bound_for_ECRS}.

Next we prove the following lemma.

\begin{lemma}
Consider a decreasing function $f: [0, 1] \to [m, M]$ where $0 \le m \le M \le 1$.
Set $I = \int_{0}^{1}{f(x) \,dx}$ and let $\phi(x) = -x \logtwo(x)$.
Then:
\begin{align*}
\Ent_b[I] - \logtwo(e + M) \le \int_{0}^{1}{\phi(f(x)) \,dx} - (1-I) \logtwo(e + m)
\end{align*}
\end{lemma}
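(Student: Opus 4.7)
The plan is to reduce to a scalar inequality via concavity of $\phi$, then analyse the resulting one-parameter family in $I$.

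First, I apply the secant-line bound to the concave function $\phi(y) = -y \logtwo y$: for $y \in [m, M]$,
\begin{align*}
\phi(y) \geq \frac{M - y}{M - m}\phi(m) + \frac{y - m}{M - m}\phi(M).
\end{align*}
Integrating against $f$ on $[0, 1]$ and using $\int_0^1 f\, dx = I$ yields $\int_0^1 \phi(f(x))\, dx \geq (1 - t)\phi(m) + t \phi(M)$, where $t = (I - m)/(M - m)$. The bound is sharp, attained by the two-level step function, so it suffices to prove the scalar inequality
\begin{align*}
\Ent_b[I] + (1 - I)\logtwo(e + m) - \logtwo(e + M) \leq t \phi(M) + (1 - t)\phi(m).
\end{align*}

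Next, I set $F(I) = t\phi(M) + (1-t)\phi(m) + \logtwo(e+M) - (1-I)\logtwo(e+m) - \Ent_b[I]$ with $m, M$ fixed. A direct computation shows $F''(I) = 1/[I(1-I)\ln 2] > 0$, so $F$ is strictly convex on $[m, M]$. Its minimum is attained either at an endpoint or at the unique interior critical point $I^*$ satisfying $\logtwo(I^*/(1-I^*)) = (M \logtwo M - m \logtwo m)/(M - m) - \logtwo(e + m)$.

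At the endpoints, both $F(m) \geq 0$ and $F(M) \geq 0$ reduce after simplification to the scalar inequality $(1 - c)^{1 - c}(e + c)^c \geq 1$ for $c \in [0, 1]$. I establish this by setting $g(c) = c \ln(e + c) + (1 - c)\ln(1 - c)$ and checking that $g(0) = g'(0) = 0$ together with $g''(c) = 1/(e + c) + e/(e + c)^2 + 1/(1 - c) > 0$, so $g \geq 0$; exponentiating yields the endpoint bound.

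The main obstacle is the interior case, since convexity of $F$ does not automatically propagate non-negativity from the endpoints. Substituting the critical-point condition into $F(I^*) \geq 0$ reduces the claim to
\begin{align*}
(e + M)\left(M/m\right)^{mM/(M-m)} \geq e + m + \left(M^M/m^m\right)^{1/(M-m)}.
\end{align*}
A naive weighted AM-GM with weights $1 - I$, $tM$, $(1 - t)m$ summing to $1$ produces only the bound $e + m + M$ on the relevant product, leaving a residual additive gap of order $m$. To close the gap I expect to use the substitution $u = M/m \geq 1$ and $\alpha = u/(u-1)$, reducing the interior claim to a one-variable inequality in $u$ that can be verified by a convexity/monotonicity argument exploiting the structural identity $\alpha(u - 1) = u$.
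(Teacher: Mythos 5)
Your reduction steps are sound, and the first one is actually cleaner than the paper's: the chord (secant) bound for the concave $\phi$ gives $\int_0^1 \phi(f)\,dx \ge t\phi(M)+(1-t)\phi(m)$ with $t=(I-m)/(M-m)$ directly, whereas the paper invokes a majorization theorem to compare $f$ with the two-level step function; both land on the same scalar inequality in $(m,I,M)$. Your observation that $F$ is strictly convex in $I$ with $F''(I)=1/[I(1-I)\ln 2]$ is correct, the endpoint cases do reduce to $(1-c)^{1-c}(e+c)^c\ge 1$ and your convexity proof of that is fine, and I have checked that substituting the critical-point condition indeed gives $F(I^*)=\tfrac{mM}{M-m}\logtwo\tfrac{M}{m}+\logtwo\tfrac{e+M}{\,e+m+\exptwo(A)\,}$ with $\exptwo(A)=(M^M/m^m)^{1/(M-m)}$, i.e.\ exactly your displayed two-variable inequality.

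The genuine gap is that this final inequality --- which is the entire crux of the lemma --- is never proved. Your closing sketch does not work as described: with $u=M/m$ the claim becomes $(e+um)\,u^{mu/(u-1)}\ \ge\ e+m+m\,u^{u/(u-1)}$, which still depends on both $m$ and $u$ because the additive constant $e$ destroys homogeneity, so no substitution in $u$ alone yields a one-variable inequality, and the identity $\alpha(u-1)=u$ is vacuous. Worse, the inequality is tight: equality is approached as $m\to 0$ (the paper's analysis shows the only equality case of the scalar inequality is $m=0$, $I=M/(e+M)$), and also as $M/m\to 1$ the two sides agree to first order in $m$, so any slack-based bound (your AM--GM attempt already loses an additive $\Theta(m)$) cannot close it; a delicate argument is unavoidable. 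Note also that you are proving slightly more than needed, since the unconstrained minimizer $I^*$ can fall outside $[m,M]$ (e.g.\ $I^*<m$ when $M\downarrow m$), in which case only the endpoint values matter; restricting to the regime $I^*\in[m,M]$ may help. For comparison, the paper finishes differently: it strengthens the scalar inequality, then rules out interior local maxima in $(m,I,M)$ by a perturbation ($M\mapsto M-t$, $m\mapsto m-t$) argument and settles the boundary cases $I=M$ and $m=0$ explicitly, the latter producing the equality point $I=M/(e+M)$. As it stands, your proposal is an essentially correct reduction plus an unproven (and, via the proposed route, unprovable-as-sketched) final inequality, so the proof is incomplete.
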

\begin{proof}
Note that $m \le I \le M$. Define a decreasing function $g: [0, 1] \to [m, M]$ by:
\begin{align*}
g(x) =
\begin{cases}
M & x \le \frac{I - m}{M - m}
\\
m & \textit{otherwise}
\end{cases}
\end{align*}
One can verify that $\int_{0}^{1}g(x) \,dx = I = \int_{0}^{1}f(x) \,dx$. Furthermore, we find $\int_{0}^{a}g(x) \, dx \ge \int_{0}^{a}f(x) \,dx$ for each $a \in [0, 1]$. Thus $g$ majorizes $f$ in the sense defined in \cite{horváth2023uniform}. By Theorem 3 of \cite{horváth2023uniform} we thus obtain:
\begin{align*}
    \int_{0}^{1}{\phi(f(x)) \,dx} &\ge \int_{0}^{1}{\phi(g(x)) \,dx} 
    \\
    &= \frac{(I - m) \phi(M) + (M - I) \phi(m)}{M - m}
\end{align*}
Thus we strengthen the thesis:
\begin{align*}
&\: \Ent_b[I] - \logtwo(e + M) + (1-I)\logtwo(e + m)
\\
\le&\: \frac{(I - m) \phi(M) + (M - I) \phi(m)}{M - m}
\end{align*}
and strengthening more still (while changing basis of logarithm to natural all throughout):
\begin{align}
&\: \Ent_b[I] - I - \ln(1 + M/e) + \ln(1 + m/e)
\nonumber
\\
-&\: \frac{(I - m) \phi(M) + (M - I) \phi(m)}{M - m} \le 0
\label{ineq:stupid_lemma}
\end{align}

Fix some $0 \le m_0 \le I_0 \le M_0 \le 1$ and suppose that the expression (\ref{ineq:stupid_lemma}) attains a local maximum at $(m_0, I_0, M_0)$.
We consider 3 cases.

Firstly assume that $0 < m_0 \le I_0 < M_0 \le 1$.
We parameterize $M(t) = M_0 - t$, $I(t) = I_0$, $m(t) = m_0 - t$. 
Seeing how expression $(\ref{ineq:stupid_lemma})$ attains local maximum at $(m_0, I_0, M_0)$, the derivative of $(\ref{ineq:stupid_lemma})$ w.r.t. $t$ at $t = 0$ must be non-positive.
In other words:
\begin{align}
\label{eq:stupid2}
\frac{M_0 + m_0}{M_0 - m_0} \ln(M_0/m_0) \le 1 - \frac{1}{M_0 + e} + \frac{1}{m_0 + e}
\end{align}
However, one can verify that for all $m_0, M_0 > 0$:
\begin{align*}
    \frac{M_0 + m_0}{M_0 - m_0} \ln(M_0/m_0) \ge 2 > 1 - \frac{1}{M_0 + e} + \frac{1}{m_0 + e}
\end{align*}
Thus (\ref{eq:stupid2}) is not satisfied - a contradiction.

The second case is $I_0 = M_0$. Then one can verify that (\ref{ineq:stupid_lemma}) holds whenever $I = M$ or $I = m$. Hence for the rest of the proof we will assume that $I_0$ satisfies $m_0 < I_0 < M_0$.

The last case is $m_0 = 0$. Then ineq. (\ref{ineq:stupid_lemma}) takes form:
\begin{align}
\Ent_b[I_0] - I_0 - \ln(1 + M_0/e) + I_0 \ln(M_0) \le 0
\label{ineq:stupid_lemma0}
\end{align}
The expression (\ref{ineq:stupid_lemma0}) is concave in $I_0$.
Differentiating w.r.t $I$, while noting that $m_0 < I_0 < M_0$ is a local maximum, we deduce that $I_0 = \nicefrac{M_0}{(e + M_0)}$.
Substituting into (\ref{ineq:stupid_lemma0}) we find an equality. Hence (\ref{ineq:stupid_lemma}) holds for all $0 \le m \le I \le M \le 1$, which finishes the proof.
\end{proof}
\section{Additional Results for the Greedy Rejection Sampler}
\label{appendix:grs_additional_results}

The constant term in inequality of \Cref{thm:grs_achievability} is optimal.
Indeed, let:
\begin{align*}
w_\epsilon(x) = \begin{cases}
1 & x \le \frac{1}{1+e}
\\
\epsilon & \frac{1}{1+e} < x \le \frac{1}{1+e} + \frac{e}{(1+e)\epsilon}
\\
0 & \text{otherwise}
\end{cases}
\end{align*}
Then $\Ent[K] - \CSD{Q}{P} \to \logtwo(e + 1)$ as $\epsilon \to 0$, where $K$ is index returned by GRS.

Furthermore, we can extend \Cref{thm:grs_achievability} to alternative channel simulation divergence. The proof is the same, but it relies on a lemma:
\begin{lemma}
\label{lemma:ACSD_GRS_bound}
Fix $P, Q$ and $k \ge 1$. Let $K$ be the index returned by GRS. Then:
\begin{align*}
\begin{split}
\Ent[K] \le &f(1) + \int_{0}^{L_k}{\Ent_b(w(h)) \, dh}
\\
&+ S_k \left(\Ent[K | K \ge k] - f(w(L_k))\right)
\end{split}
\end{align*}
where $f(x) = \logtwo(1 + x (1-x)^{\nicefrac{(1-x)}{x}}) = \logtwo(1 + e^{-\Ent_b[x]/x})$.
\end{lemma}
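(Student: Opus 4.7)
The strategy is to mirror the inductive proof of Theorem~\ref{thm:grs_achievability} almost verbatim, substituting the binary entropy $\Ent_b$ in place of $\phi(x) = -x\logtwo x$ and the function $f(x) = \logtwo(1 + x(1-x)^{(1-x)/x})$ in place of $\logtwo(e + x)$. The induction is on $k$. The base case $k = 1$ is immediate: $L_1 = 0$, $S_1 = 1$, $w(0) = 1$, so the right-hand side collapses to $f(1) + 0 + 1 \cdot (\Ent[K \mid K \ge 1] - f(1)) = \Ent[K]$.

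\textbf{Inductive step.} Assuming the bound for $k$, expanding $\Ent[K \mid K \ge k] = \Ent_b[q_k] + (1-q_k)\Ent[K \mid K \ge k+1]$, performing the change of variables $h = S_k u + L_k$ inside the integral, and using $S_{k+1} = S_k(1 - q_k)$, the terms involving $\Ent[K \mid K \ge k+1]$ cancel exactly and the step-$(k+1)$ claim reduces to the pointwise inequality
\begin{align*}
\Ent_b[I] - f(M) \le \int_0^1 \Ent_b(\psi(u))\, du - (1 - I) f(m),
\end{align*}
where $\psi(u) = w(S_k u + L_k)$, $M = \psi(0) = w(L_k)$, $m = \psi(1) = w(L_{k+1})$, and $I = q_k = \int_0^1 \psi(u)\, du$. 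Note $0 \le m \le I \le M \le 1$ because $w$ is decreasing with values in $[0, 1]$.

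\textbf{Main obstacle.} The crux is this final pointwise inequality. Following the majorization argument used for the analogous ``stupid lemma'' in Appendix~\ref{appendix:causal_rejection_samplers}, let $g$ be the decreasing two-valued step function on $[0, 1]$ taking the value $M$ on $\bigl[0, (I - m)/(M - m)\bigr]$ and the value $m$ afterwards; then $g$ majorizes the decreasing $\psi$ and has the same mean $I$. Since $\Ent_b$ is concave, the Hardy--Littlewood--P\'olya inequality (in the form of Theorem~3 of \cite{horváth2023uniform}) yields
\begin{align*}
\int_0^1 \Ent_b(\psi(u))\, du \ge \frac{(I - m)\Ent_b(M) + (M - I)\Ent_b(m)}{M - m},
\end{align*}
and it suffices to show
\begin{align*}
\Ent_b[I] - f(M) \le \frac{(I - m)\Ent_b(M) + (M - I)\Ent_b(m)}{M - m} - (1 - I) f(m).
\end{align*}
The particular form of $f$ is engineered precisely so that this holds: one checks that, fixing $m = 0$ and maximizing the left-hand-side-minus-right-hand-side over $I \in [0, M]$, the interior critical point yields an equality, while the remaining boundary regimes ($I \in \{m, M\}$ or $m > 0$) can be handled by the same three-case local-maximum / directional-derivative analysis used in Appendix~\ref{appendix:causal_rejection_samplers}. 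This algebra is the only tedious piece; no new structural ideas beyond the template of Theorem~\ref{thm:grs_achievability} are required.
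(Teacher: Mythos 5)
Your proposal takes exactly the route the paper intends: the paper itself only remarks that this lemma is proved ``similarly to'' \Cref{thm:grs_achievability} by induction, and your argument is precisely that induction with $\Ent_b$ in place of $\phi$ and $f$ in place of $\logtwo(e+\cdot)$, reduced via the same majorization step to a scalar inequality in $(m, I, M)$ whose $m=0$ interior critical point gives equality with $f(M)$ (your check of this, which is what pins down the form of $f$, is correct). The only part you leave schematic --- the case analysis for $m>0$ --- is likewise omitted in the paper, so the proposal is correct and matches the paper's own treatment.
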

Similarly to \Cref{thm:grs_achievability}, the proof of \Cref{lemma:ACSD_GRS_bound} also relies on induction. Since $f(1) = 1$ we obtain a new upper bound.
\begin{theorem}
\label{thm:bound_by_ACSD}
    Let $K$ be the index returned by GRS. Then:
    \begin{align*}
        \Ent[K] \le \ACSD{Q}{P} + 1
    \end{align*}
\end{theorem}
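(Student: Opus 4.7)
The plan is to mirror the proof of Theorem \ref{thm:grs_achievability}, replacing the convex function $\phi(x)=-x\logtwo x$ by the binary entropy $\Ent_b[x]$ and the additive correction term $\logtwo(e+x)$ by the function $f$ from Lemma \ref{lemma:ACSD_GRS_bound}. The key feature that sharpens the additive constant to $1$ is that $f$ maps $[0,1]$ into $[0,1]$ with $f(0)=0$ and $f(1)=1$, the latter because $\Ent_b[x]/x\to 0$ as $x\to 1$.

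First I would establish Lemma \ref{lemma:ACSD_GRS_bound} by induction on $k$. The base case $k=1$ is immediate from $L_1=0$, $S_1=1$, $w(L_1)=1$. For the inductive step, I combine the same three ingredients as in Theorem \ref{thm:grs_achievability}: the chain rule $\Ent[K|K\ge k]=\Ent_b[q_k]+(1-q_k)\Ent[K|K\ge k+1]$, the substitution $\int_{L_k}^{L_{k+1}}\Ent_b(w(h))\,dh=S_k\int_0^1\Ent_b(g(x))\,dx$ with $g(x)=w(S_k x+L_k)$, and $S_{k+1}=S_k(1-q_k)$. These reduce the step to the pointwise inequality
\begin{align*}
\Ent_b[I]-f(M)\le \int_0^1\Ent_b(g(x))\,dx-(1-I)f(m),
\end{align*}
where $g$ is decreasing on $[0,1]$ with $M=g(0)$, $m=g(1)$, $I=\int_0^1 g(x)\,dx$.

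Next, following the majorization argument of Appendix \ref{appendix:causal_rejection_samplers}, I would replace $g$ by the two-step function sharing $M$, $m$ and $I$, and apply Horváth's inequality now with the concave function $\Ent_b$ in place of $\phi$, yielding the chord lower bound
\begin{align*}
\int_0^1\Ent_b(g(x))\,dx\ge \frac{(I-m)\Ent_b[M]+(M-I)\Ent_b[m]}{M-m}.
\end{align*}
What remains is a purely three-variable inequality in $(m,I,M)\in[0,1]^3$ subject to $m\le I\le M$. This is the main obstacle: the defining formula for $f$ is engineered precisely to make this inequality tight on its natural boundary, and I expect to dispatch it by the same boundary-plus-stationary-point case analysis used for ineq.~(\ref{ineq:stupid_lemma}) --- first handling the boundaries $I\in\{m,M\}$ and $m=0$, then ruling out interior stationary points of the difference via first-order conditions along deformations of the form $(m-t,I,M-t)$, leveraging the defining identity $2^{f(x)}-1 = 2^{-\Ent_b[x]/x}$.

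Finally, I would pass to the limit $k\to\infty$ in Lemma \ref{lemma:ACSD_GRS_bound}, assuming $\Ent[K]<\infty$ (the alternative case being vacuous). Exactly as in the proof of Theorem \ref{thm:grs_achievability}, $S_k\to 0$ together with $f\le 1$ yields $S_k f(w(L_k))\to 0$; the identity $S_k\Ent[K|K\ge k]=-\phi(S_k)+\sum_{i\ge k}\phi(p_i)$ combined with $\sum_i\phi(p_i)=\Ent[K]<\infty$ yields $S_k\Ent[K|K\ge k]\to 0$; and monotone convergence yields $\int_0^{L_k}\Ent_b(w(h))\,dh\to\ACSD{Q}{P}$. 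Together with $f(1)=1$, the lemma's bound collapses to $\Ent[K]\le 1+\ACSD{Q}{P}$, which is the claimed inequality.
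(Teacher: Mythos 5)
Your proposal follows essentially the same route as the paper's own (very terse) proof: establish Lemma~\ref{lemma:ACSD_GRS_bound} by the same induction used for Theorem~\ref{thm:grs_achievability}, with $\Ent_b$ replacing $\phi$ and $f$ replacing $\logtwo(e+x)$, then take $k \to \infty$ and use $f(1) = 1$. The one step you only sketch --- the three-variable inequality obtained after the majorization/chord reduction --- is likewise left unproved in the paper, and your observation that $f(x) = \logtwo(1 + e^{-\Ent_b[x]/x})$ is engineered so that the $m = 0$ boundary case is exactly tight (via the softplus identity $\max_I\{\Ent_b[I] - aI\} = \ln(1+e^{-a})$) is the right reason the inequality holds.
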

As before, the constant in \Cref{thm:bound_by_ACSD} is optimal.




\end{document}